\crefname{figure}{Figure}{Figures}
\crefname{theorem}{Theorem}{Theorems}
\crefname{lemma}{Lemma}{Lemmas}
\crefname{corollary}{Corollary}{Corollaries}
\crefname{section}{Section}{Sections}
\crefname{observation}{Observation}{Observations}
\crefname{appendix}{Appendix}{Appendices}
\crefname{claim}{Claim}{Claims}
\newcommand\sbullet[1][.5]{\mathbin{\vcenter{\hbox{\scalebox{#1}{$\bullet$}}}}}
\newtheorem{problem}[theorem]{Problem}
\title{Guarding Polyominoes Under $k$-Hop Visibility}
\author{Omrit Filtser}{Department of Mathematics and Computer Science, The Open University of Israel,  Israel \and \url{https://omrit.filtser.com/}}{omrit.filtser@gmail.com}{https://orcid.org/0000-0002-3978-1428}{}
\author{Erik Krohn}{Department of Computer Science, University of Wisconsin - Oshkosh, Oshkosh, WI, USA}{krohne@uwosh.edu}{https://orcid.org/0000-0002-5832-8135}{}
\author{Bengt~J. {Nilsson}}{Department of Computer Science and Media Technology, Malm\"o University, Sweden \and \url{http://webshare.mah.se/tsbeni/} }{bengt.nilsson.TS@mau.se}{https://orcid.org/0000-0002-1342-8618}{}
\author{Christian Rieck}{Department of Computer Science, TU Braunschweig, Germany}{rieck@ibr.cs.tu-bs.de}{https://orcid.org/0000-0003-0846-5163}{}
\author{Christiane Schmidt}{Department of Science and Technology, Link\"oping University, Sweden \and \url{https://www.itn.liu.se/~chrsc91/}}{christiane.schmidt@liu.se}{https://orcid.org/0000-0003-2548-5756}{}
\authorrunning{O. Filtser, E. Krohn, B.~J. Nilsson, C. Rieck, and C. Schmidt}
\keywords{Art Gallery problem, $k$-hop visibility, polyominoes, VC~dimension, approximation, $k$-hop dominating set} 
\newcommand{\planarsat}{\textsc{Planar Monotone 3Sat}\xspace}
\def\eps{{\varepsilon}}
\newcommand{\rb}{\operatorname{rb}}
\begin{document}

\maketitle
	
\begin{abstract}
We study the \textsc{Art Gallery Problem} under $k$-hop visibility in polyominoes. 
In this visibility model, two unit squares of a polyomino can see each other if and only if the shortest path between the respective vertices in the dual graph of the polyomino has length at most~$k$.

In this paper, we show that the VC dimension of this problem is \num{3} in simple polyominoes, and \num{4} in polyominoes with holes. 
Furthermore, we provide a reduction from \planarsat, thereby showing that the problem is \NP-complete even in thin polyominoes (i.e., polyominoes that do not contain a $2\times 2$ block of cells).
Complementarily, we present a linear-time $4$-approximation algorithm for simple $2$-thin polyominoes (which do not contain a $3\times 3$ block of cells) for all $k\in \mathbb{N}$.
\end{abstract}

\section{Introduction}\label{sec:intro}
``How many guards are necessary and sufficient to guard an art gallery?'' 
This question was posed by Victor Klee in 1973, and led to the classic \textsc{Art Gallery Problem}: 
Given a polygon $P$ and an integer $\ell$, decide whether there is a guard set of cardinality~$\ell$ such that every point $p\in P$ is seen by at least one guard, where a point is seen by a guard if and only if the connecting line segment is inside the polygon. 

Now picture the following situation: A station-based transportation service (e.g., carsharing) wants to optimize the placement of their service stations. 
Assume that the demand is given in a granularity of (square) cells, and that customers are willing to walk a certain distance (independent of where they are in the city) to a station. 
Then, we aim to place as few stations as possible to serve an entire city for a given maximum walking range of $k$ cells. We thus represent the city as a polyomino, potentially with holes, and only walking within the boundary is possible (e.g., holes would represent water bodies or houses, which pedestrians cannot cross).

Our real-world example therefore motivates the following type of visibility. 
Two cells $u$ and $v$ of a polyomino $P$ can see each other if the shortest geodesic path from $u$ to $v$ in the polyomino has length at most $k$. 
For $k\geq 2$, this allows for the situation that a cell sees other cells that are around or behind corners of the polyomino, as visualized in~\cref{fig:vis}.

\begin{figure}[htb]
	\centering
	\includegraphics[scale=0.6]{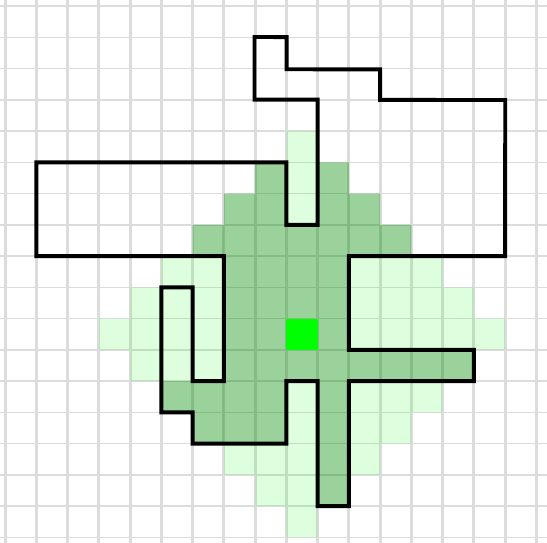}
	\caption{A unit square in green with its $k$-hop-visibility region for $k=6$ (shaded in dark green) within a polyomino---a subset of the diamond shown in light green.}
	\label{fig:vis}
\end{figure}

In this paper, we investigate the following problem.

\begin{problem}[\textsc{Minimum $k$-Hop Guarding Problem in Polyominoes} (M$k$GP)]
	Given a polyomino~$P$ and an integer $k$, find a minimum-cardinality unit-square guard cover in $P$ under $k$-hop visibility.
\end{problem}

As the dual graph of a polyomino is a grid graph, we analogously state the following. 

\begin{problem}[\textsc{Minimum $k$-Hop Dominating Set Problem in Grid Graphs} (M$k$DSP)]
	Given a grid graph $G$ and an integer $k$, find a minimum-cardinality subset $D_k\subseteq V(G)$, such that for any vertex $v\in V(G)$ there exists a vertex $u\in D_k$ within hop distance of at most $k$.
\end{problem}

\medskip
While we formulated the optimization variants, the associated decision problems are defined as expected with an upper bound on the number of guards or dominating vertices, respectively.

\paragraph*{Our Contributions}
In this paper, we give the following results for the \textsc{Minimum $k$-Hop Guarding Problem in Polyominoes}.
\begin{itemize}
	\item In~\cref{sec:vc}, we analyze the VC dimension of the problem and give tight bounds. 
	In~particular, we prove that inside a simple polyomino exactly \num{3} squares can be shattered by $k$-hop visibility, see~\cref{th:vc-sim}. 
	For polyominoes with holes, we show that the VC dimension of $k$-hop visibility is \num{4}, see~\cref{th:vc-hol}.
	\item  In~\cref{sec:np}, we study the computational complexity of the respective decision version of the problem.
	We show that the problem is \NP-complete for $k\geq 2$, even in $1$-thin polyominoes with holes (polyominoes that do not contain a $2\times 2$ block of unit squares), see \cref{th:np}.
	\item In~\cref{sec:appx}, we provide a linear-time $4$-approximation for simple polyominoes that do not contain a $3\times 3$ block of unit squares (i.e., $2$-thin polyominoes), see~\cref{th:4-appx}.
\end{itemize}

\paragraph*{Related Work} 
The classic \textsc{Art Gallery Problem} (AGP) is \NP-hard~\cite{ll-ccagp-86,os-snpdp-83}, even in the most basic problem variants. 
Abrahamsen, Adamaszek, and Miltzow~\cite{AbrahamsenAM22} recently showed that the AGP is $\exists\mathbb{R}$-complete, even when the corners of the given polygon are at integer~coordinates.

Guarding polyominoes and thin (orthogonal) polygons has been considered for different definitions of visibility. 
Tomás~\cite{t-gtoph-13} showed that computing a minimum guard set under the original definition of visibility is \NP-hard for point guards and \APX-hard for vertex or boundary guards in thin orthogonal polygons; an orthogonal polygon is defined as {\it thin} if the dual graph of the partition obtained by extending all polygon edges through incident reflex vertices is a tree.  
Biedl and Mehrabi~\cite{bm-orgto-16} considered guarding thin orthogonal polygons under rectilinear visibility (two points can see each other if the axis-aligned rectangle spanned by these points is fully contained in the polygon). 
They showed that the problem is \NP-hard in orthogonal polygons with holes, and provided a linear-time algorithm that computes a minimum set of guards under rectilinear vision for tree polygons. 
Their~approach generalizes to polygons with~$h$ holes or thickness~$t$ (the dual graph of the polygon does not contain an induced $(t+1)\times (t+1)$ grid); hence, the problem is fixed-parameter tractable in $t+h$. 
Biedl and Mehrabi~\cite{bm-ogopb-21} extended this study to orthogonal polygons with bounded treewidth under different visibility definitions usually used in orthogonal polygons: rectilinear visibility, staircase visibility (guards see along an axis-parallel staircase), and limited-turn path visibility (guards see along axis-parallel paths with at most $b$ bends). 
Under these visibility definitions, they showed the guarding problem to be  linear-time solvable. 
For~orthogonal polygons, Worman and Keil~\cite{wk-pdoag-07} gave a polynomial-time algorithm to compute a minimum guard cover under rectilinear visibility by exploiting that an underlying graph is~\emph{perfect}, and that this particular variant is equivalent to a clique cover problem.

Biedl et al.~\cite{BIIKM11} proved that determining the guard number of a given simple polyomino with $n$ unit squares is \NP-hard even in the all-or-nothing visibility model (a~unit square $s$ of the polyomino is visible from a guard $g$ if and only if $g$ sees all points of~$s$ under ordinary visibility), and under ordinary visibility. 
They presented polynomial-time algorithms for thin polyominoes, for which the dual is a tree, and for the all-or-nothing model with limited range of visibility. 
Iwamoto and Kume~\cite{ik-ccrvg-13} complemented the \NP-hardness results by showing \NP-hardness for polyominoes with holes also for rectilinear visibility.  Pinciu~\cite{p-gppp-15} generalized this to polycubes, and gave simpler proofs for known results and for guarding polyhypercubes.

The \textsc{Minimum $k$-Hop Dominating Set Problem} is \NP-complete in general graphs~\cite{apvh-mmdcf-00,bm-finsn-14}.
For trees, Kundu and Majunder~\cite{KM16} showed that the problem can be solved in linear time. 
Recently, Abu-Affash et al.~\cite{ack-ltamk-22} simplified that algorithm, and provided a linear-time algorithm for cactus graphs.
Borradaile and Le~\cite{BL16} presented an exact dynamic programming algorithm that runs in $O((2k+1)^{tw}\cdot n)$ time on graphs with treewidth~$tw$.
Demaine et al.~\cite{DFHT05} consider the decision version of the \textsc{Minimum $k$-Hop Dominating Set Problem}\footnote{In fact, Demaine et al. call the problem the  ($\ell,k$)-center problem.} on planar and map graphs, asking whether a graph can be covered with a dominating set of cardinality $\ell$.
They showed that for these graph families, the problem is fixed-parameter tractable by providing an exact $2^{O(k\log k)\sqrt{\textrm{OPT}}}\cdot\poly(n)$ time algorithm, where $\textrm{OPT}$ is the size of an optimal solution. 
They also obtained a $(1+\eps)$-approximation for these families that runs in $k^{O(\nicefrac{k}{\eps})}\cdot m$ time, where $m$ is the number of edges in the graph.

In the general case, where the edges of the graph are weighted, the problem is typically called the \textsc{$\rho$-Dominating Set Problem}. 
Katsikarelis et al.~\cite{klp-sptba-19} provided an \FPT\ approximation scheme parameterized by the graphs treewidth~$tw$, or its clique-width~$cw$. 
In~particular, if there exists a $\rho$-dominating set of size~$s$ in a given graph, the approximation scheme computes a $(1+\eps)\rho$-dominating set of size at most $s$ in time $(\nicefrac{tw}{\eps})^{O(tw)}\cdot \poly(n)$, or $(\nicefrac{cw}{\eps})^{O(cw)}\cdot \poly(n)$, respectively. 
Fox-Epstein et al.~\cite{fks-epglt-19} provided a bicriteria \EPTAS\ for $\rho$-domination in planar graphs (later improved and generalized by Filtser and Le~\cite{FL22}). 
Their algorithm runs in $O(n^c)$ time (for some constant $c$), and returns a $(1+\eps)\rho$-dominating set of size $s\le(1+\eps)\textrm{OPT}_{\rho}$, where $\textrm{OPT}_{\rho}$ is the size of a minimum $\rho$-dominating set. 
Filtser and Le~\cite{fl-cetlt-21} provided a \PTAS\ for $\rho$-domination in $H$-minor-free graphs, based on local~search. 
Their algorithm has a runtime of $n^{O_H(\eps^{-2})}$ and returns a $\rho$-dominating set of size at most $(1+\eps)\textrm{OPT}_{\rho}$. 
Meir and Moon~\cite{mm-rbpcnt-75} showed an upper bound of $\lfloor\frac{n}{k+1}\rfloor$ on the number of vertices in a $k$-hop dominating set of any tree with $n$ vertices. 
This bound trivially holds for general graphs by using any spanning~tree. 

\section{Notation and Preliminaries}\label{sec:not}
A \emph{polyomino} $P$ of \emph{size} $|P|=n$ is a connected polygon in the plane formed by joining together $n$ unit squares (also called \emph{cells}) on the square lattice.
The dual graph~$G_P$ of a polyomino has a vertex at the center point of each cell of~$P$, and there is an edge between two center points if their respective cells share an edge. 
Note that $G_P$ is a \emph{grid graph}, thus, any vertex has at most four neighbors. 
A~polyomino is \emph{simple} if it has no holes, that is, every inner face of its dual graph has unit area.
A polyomino $P$ is \emph{$t$-thin} if it does not contain a block of squares of size $(t+1)\times(t+1)$, and analogously, a~grid graph~$G$ is \emph{$t$-thin} if $P_G$ is $t$-thin, where $P_G$ denotes the polyomino (that is unique except for rotation) for which $G$ is the dual~graph. 

A~unit square $v\in P$ is \emph{$k$-hop-visible} to a unit square $u\in P$
if the shortest path from $u$ to $v$ in $G_P$ has length at most $k$. 
The \emph{$k$-hop-visibility region} of a unit square~$u\in P$ is the set of all unit squares that are $k$-hop-visible from $u$.
It is a subset of the diamond with diameter $2k$---the maximal $k$-hop-visibility region, see~\cref{fig:vis}.
A~\emph{witness set} is a set $W$ of unit squares, such that the $k$-hop-visibility regions of the elements in $W$ are pairwise disjoint. 
A~\emph{witness} placed at a unit square $u$ vouches that at least one \emph{guard} has to be placed in its $k$-hop-visibility region.

\medskip
As mentioned, Meir and Moon~\cite{mm-rbpcnt-75} showed an upper bound of $\lfloor\frac{n}{k+1}\rfloor$ for $k$-hop dominating sets for graphs with $n$ vertices.
To the best of our knowledge, no matching lower bound is provided, so we catch up by showing that there is, for every $k\in \mathbb{N}$, a simple $1$-thin grid graph that requires that many dominating vertices.
Stated in context of the guarding problem, we~show the following.

\begin{proposition}\label{le:nec}
	For every $k\in \mathbb{N}$, there exist simple polyominoes of size $n$ that require $\lfloor\frac{n}{k+1}\rfloor$ guards to cover their interior under $k$-hop visibility.
\end{proposition}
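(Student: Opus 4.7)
The plan is to construct, for every $m \geq 1$, a simple $1$-thin polyomino $P_{m,k}$ of size $n = m(k+1)$ whose $k$-hop guarding number equals $m = \lfloor n/(k+1)\rfloor$. I would take as $P_{m,k}$ an \emph{alternating comb}: a horizontal spine of $m$ unit squares at positions $(0,0), (1,0), \ldots, (m-1,0)$, together with a vertical corridor of $k$ cells attached to each spine cell $(i,0)$ — placed above ($y>0$) if $i$ is even and below ($y<0$) if $i$ is odd. The polyomino has $m + m\cdot k = m(k+1)$ cells, is clearly connected, and has no enclosed region, hence is simple. It is also $1$-thin: in row $y=1$ only even-indexed columns are occupied and in row $y=-1$ only odd-indexed columns, so no two horizontally adjacent cells appear outside the spine; this precludes every candidate $2\times 2$ block involving the spine and a corridor row, as well as every block within the corridors themselves.

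The lower bound would come from a witness argument. Place a witness $w_i$ at the tip of each corridor, namely at $(i,k)$ for $i$ even and at $(i,-k)$ for $i$ odd. The corridor attached to spine cell $(i,0)$ is a dead-end path in the dual graph, so the only way for a walk from $w_i$ to leave the corridor is to traverse the $k$ corridor edges back to $(i,0)$ first; consequently the $k$-hop visibility region of $w_i$ is exactly the $k$ cells of its corridor together with the single spine cell $(i,0)$, that is, $k+1$ cells in total. The $m$ visibility regions lie in pairwise disjoint columns and are therefore pairwise disjoint, so $\{w_0, \ldots, w_{m-1}\}$ is a witness set of cardinality $m$, forcing at least $m$ guards.

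For the matching upper bound, place one guard at each $w_i$: since the $m$ visibility regions together partition all $n = m(k+1)$ cells of $P_{m,k}$, these guards form a valid cover. Combining both bounds gives the guarding number exactly $m = \lfloor n/(k+1)\rfloor$, which together with the infinite family of sizes $n=m(k+1)$ proves the proposition. The main point requiring care is the $1$-thinness check, since putting all corridors on the same side of the spine would immediately create $2\times 2$ blocks adjacent to the spine; the alternating-sides trick is precisely the device that reconciles the $k+1$-cell visibility region of each witness with the thinness constraint.
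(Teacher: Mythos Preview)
Your construction is essentially identical to the paper's: an alternating double comb with teeth of length $k$ attached to a spine, witnesses at the tooth tips, and the observation that each tip's $k$-hop-visibility region is exactly its own tooth together with the single spine cell below it. The paper additionally pads the handle with $n \bmod (k+1)$ extra cells so as to realize every value of $n$ (not just multiples of $k+1$), whereas you only build the family $n=m(k+1)$; since $\lfloor n/(k+1)\rfloor$ is unchanged by this padding and the rightmost guard absorbs the extra cells, this is a trivial extension, and your argument is otherwise the same and correct.
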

\begin{proof}
	We construct a double-comb like polyomino by alternately adding teeth of length $k$ to the top and the bottom of a row of unit squares (the handle).
	\Cref{fig:lb} depicts the construction for $k=1$ and $k=2$. 
	
	\begin{figure}[htb]
		\centering
		\begin{subfigure}[b]{0.5\textwidth}
			\centering
			\includegraphics[scale=0.6]{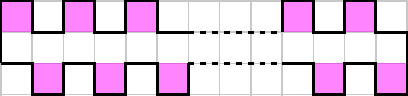}
			\caption{}
			\label{fig:lb-a}
		\end{subfigure}%
		\begin{subfigure}[b]{0.5\textwidth}
			\centering
			\includegraphics[scale=0.6]{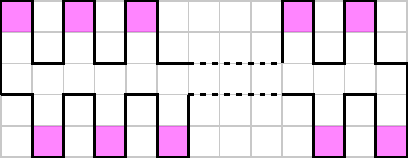}
			\caption{}
			\label{fig:lb-b}
		\end{subfigure}%
		\caption{\label{fig:lb} Lower bound construction for simple polyominoes that require $\lfloor\frac{n}{k+1}\rfloor$ guards under $k$-hop visibility for (a) $k=1$ and (b) $k=2$. Witnesses are shown in pink.}
	\end{figure}
	If $n$ is not divisible by $k+1$, we add $m=(n \bmod k+1)$ unit squares to the right of the handle. 
	Witnesses placed at the last unit square of each tooth (shown in pink) have disjoint $k$-hop-visibility regions (in particular, only the unit square of the handle to which the tooth is attached belongs to the $k$-hop-visibility region), hence, we need a separate guard for each witness. 
	The $m$~cells to the right of the handle can be covered by the rightmost guard if placed in the handle. 
	If $t$ denotes the number of teeth with $n=t\cdot(k+1)+m$, we clearly need $t=\lfloor\frac{n}{k+1}\rfloor$ guards.
\end{proof}

\section{VC Dimension}\label{sec:vc}
The \emph{VC dimension}, originally defined by Vapnik and Chervonenkis~\cite{vc-dimension-definition}, is a measure of complexity of a set system. 
In our setting, we say that a finite set of unit squares (guards) $D$ in a polyomino~$P$ is \emph{shattered} if for any of the $2^{|D|}$ subsets $D_i\subseteq D$ there exists a unit square $u\in P$, such that from $u$ every unit square in $D_i$ but no unit square in $D\setminus D_i$ is $k$-hop visible (or symmetrically: from every unit square in~$D_i$, the unit square $u\in P$ is $k$-hop visible, but $u$ is not $k$-hop visible from any unit square in $D\setminus D_i$). 
We then say that the unit square $u$ is a \emph{viewpoint}. 
The VC dimension is the largest $d$, such that there exists a polyomino $P$ and a set of $d$ unit-square guards $D$ that can be shattered. 
For detailed definitions, we refer to Haussler and Welzl~\cite{hw-ensrq-87}.

In this section, we study the VC dimension of the M$k$GP in both simple polyominoes and polyominoes with holes, and provide exact values for both~cases. The~VC~dimension has been studied for different guarding problems, e.g., Lange\-tepe and Lehmann~\cite{LL17} showed that the VC dimension of $L_1$-visibility in a simple polygon is exactly $5$, Gibson et al.~\cite{gkw-vcdvb-15} proved that the VC dimension of visibility on the boundary of a simple polygon is exactly $6$. 
For~line visibility in a simple polygon, the best lower bound of $6$ is due to Valtr~\cite{p-ggwnp-98}, the best upper bound of~$14$ stems from Gilbers and Klein~\cite{gk-nubvc-14}. 
Furthermore, given any set system with constant VC dimension, Brönnimann and Goodrich~\cite{bg-aoscf-95} presented a polynomial time $O(\log\textrm{OPT})$-approximation for \textsc{Set Cover}.

\smallskip
For analyzing the VC dimension, we define the \emph{rest budget} of a unit square $g\in P$ at a unit square~$c\in P$ to be $\rb(g,c)=\max\{k-d(g,c),0\}$, where $d(g,c)$ is the minimum distance between $g$ and $c$ in $G_P$, and $k$ the respective hop distance.
We~first state two structural properties which are helpful in several arguments.

\begin{observation}[Rest-Budget Observation]\label{le:rest-budget}
	Let $P$ be a polyomino, and let $g$ and $u$ be two unit squares in~$P$ such that a shortest path between them contains a unit square~$c$. Then the following holds:
	\begin{enumerate}
		\item The unit square $g$ covers $u$, if and only if $u$ is within distance $\rb(g,c)$ from~$c$.
		\item For any unit square $g'$ with $\rb(g',c)>\rb(g,c)$, if $g$ covers $u$, then so does~$g'$.
	\end{enumerate}
\end{observation}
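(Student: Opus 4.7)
The whole observation rests on a single identity: whenever $c$ lies on a shortest $g$-to-$u$ path in the dual grid graph $G_P$, we have the additivity
\[
 d(g,u)\;=\;d(g,c)+d(c,u).
\]
My plan is to combine this identity with the definition $\rb(g,c)=\max\{k-d(g,c),0\}$, the $k$-hop-visibility criterion ``$g$ covers $u$ iff $d(g,u)\le k$'', and the triangle inequality in $G_P$.

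For part~(1), I would rewrite ``$g$ covers $u$'' as $d(g,u)\le k$, substitute the additivity to obtain $d(c,u)\le k-d(g,c)$, and identify the right-hand side with $\rb(g,c)$ in the relevant regime $d(g,c)\le k$. If instead $d(g,c)>k$, then additivity already yields $d(g,u)>k$, so $g$ cannot cover $u$ and the statement is to be read in the natural sense that $c$ is a meaningful intermediate cell on a shortest path within budget. Both implications drop out of the same chain of equivalences.

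For part~(2), assume $g$ covers $u$. Part~(1) gives $d(c,u)\le\rb(g,c)$. The triangle inequality in $G_P$ gives $d(g',u)\le d(g',c)+d(c,u)$. The hypothesis $\rb(g',c)>\rb(g,c)\ge 0$ together with the definition of $\rb$ forces $d(g',c)<k$ and hence $\rb(g',c)=k-d(g',c)$. Chaining the pieces, I would write
\[
 d(g',u)\;\le\;d(g',c)+d(c,u)\;\le\;d(g',c)+\rb(g,c)\;<\;d(g',c)+\rb(g',c)\;=\;k,
\]
which yields $d(g',u)\le k$, i.e., $g'$ also covers~$u$.

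I do not foresee a real obstacle; the only subtle point is the degenerate case $\rb(g,c)=0$ in part~(2), which is cleanly resolved by invoking nonnegativity of $\rb$ to rule out $d(g',c)>k$. I would state this explicitly so that the triangle-inequality step does not silently rely on $g'$ having ``leftover budget'' at $c$.
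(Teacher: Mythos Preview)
The paper states this as an Observation and does not provide a proof, so there is no paper argument to compare against; your write-up correctly supplies the elementary verification that the authors leave implicit. The additivity $d(g,u)=d(g,c)+d(c,u)$ plus the definition of $\rb$ is exactly the right engine, and your treatment of part~(2) via the triangle inequality together with the observation that $\rb(g',c)>\rb(g,c)\ge 0$ forces $\rb(g',c)=k-d(g',c)$ is clean. Your remark on the degenerate case $d(g,c)>k$ in part~(1) is also well placed: as literally stated, the biconditional fails when $\rb(g,c)=0$ because $d(g,c)>k$ and $c=u$, so it is worth flagging that in every use the paper makes of the observation, $c$ is reached within budget (indeed, whenever $g$ covers $u$ one automatically has $d(g,c)\le d(g,u)\le k$, so the forward implication in~(1) and all of~(2) are unconditionally fine).
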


\begin{lemma}[Rest-Budget Lemma]\label{lem:rest_budget_triangle}
	Let $a,c$ be two unit squares in a simple polyomino~$P$, such that the boundary of $P$ does not cross the line segment $\overline{ac}$ that connects their center points. Let $P_{ac}$ be some path in the dual graph $G_P$ between the center points of $a$ and $c$, and let $b$ be a unit square whose center point belongs to the area enclosed within $P_{ac}\circ \overline{ac}$. Then, there exists a unit square $x$ on $P_{ac}$ such that $\rb(b,x) \geq \rb(a,x)$ and $\rb(b,x) \geq \rb(c,x)$.
\end{lemma}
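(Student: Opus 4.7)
The plan is to first reduce the rest-budget inequality to a statement about shortest-path distances in $G_P$, and then carry out a Voronoi-extension argument. By unpacking $\rb(g,y)=\max\{k-d(g,y),0\}$, the inequality $\rb(b,x)\ge\rb(a,x)$ holds automatically when $\rb(a,x)=0$, i.e.\ $d(a,x)\ge k$, and is otherwise equivalent to $d(b,x)\le d(a,x)$; analogously for $c$. Hence it suffices to exhibit $x\in P_{ac}$ with $d(b,x)\le d(a,x)$ and $d(b,x)\le d(c,x)$, i.e., a vertex of $P_{ac}$ lying in the Voronoi cell $V_b:=\{v\in V(G_P):d(b,v)\le d(a,v),\ d(b,v)\le d(c,v)\}$ of $b$ with respect to the three sites $\{a,b,c\}$. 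Since $b\in V_b$, the cell is nonempty.

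My main strategy is then to grow $V_b$ outward from $b$ until it meets $P_{ac}$. Starting at $b\in V_b$, at each current vertex $v\in V_b$ strictly inside the region $R$ enclosed by $P_{ac}\cup\overline{ac}$, I would pick a neighbour $v'\in V_b$ with $d(b,v')=d(b,v)+1$, and iterate. Two tools support this construction. First, bipartiteness of $G_P$ forces each of $d(a,\cdot),d(b,\cdot),d(c,\cdot)$ to change by exactly $\pm 1$ along any edge, so the four possible moves at $v$ split cleanly according to their effect on $(d(a,\cdot),d(c,\cdot))$. Second, the geometric condition $\overline{ac}\subseteq P$ yields a monotone axis-aligned shortest path from $a$ to $c$ of length exactly $\|a-c\|_1$ along the $\overline{ac}$-side of $R$, with the property that every vertex on this monotone path achieves equality $d(a,\cdot)+d(\cdot,c)=d(a,c)$. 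Combined with the hypothesis that $b$ is strictly enclosed (hence off this monotone path), these observations make it possible to exhibit the required outward $V_b$-preserving neighbour at each step.

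I expect the main obstacle to be twofold: verifying that an outward $V_b$-neighbour always exists at every interior vertex of $V_b\cap R\setminus P_{ac}$, and showing that the resulting walk exits $R$ through $P_{ac}$ rather than through $\overline{ac}$. For the second issue, the monotone $L^1$-path along $\overline{ac}$ certifies $d(a,c)$, so a move from $V_b$ across $\overline{ac}$ would violate the triangle inequalities defining $V_b$, ruling out such crossings. For the first, a combinatorial case analysis over the four $(\pm1,\pm1)$ increments of $(d(a,\cdot),d(c,\cdot))$ at $v$ reduces to a local geometric condition, and the enclosure of $b$ together with simplicity of the polyomino (so that $\partial R$ is a Jordan curve bounding a simply connected region inside $P$) rules out the degenerate case in which all four neighbours leave $V_b$. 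Combining these steps, the outward walk must terminate at a vertex $x\in V_b\cap P_{ac}$, giving the required point.
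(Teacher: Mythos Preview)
Your reduction to the distance inequalities $d(b,x)\le d(a,x)$ and $d(b,x)\le d(c,x)$ is fine, but the Voronoi-walk argument does not go through as stated. The specific claim that ``a move from $V_b$ across $\overline{ac}$ would violate the triangle inequalities defining $V_b$'' is false. Take $a=(0,0)$, $c=(4,0)$, $b=(2,1)$, let $P$ be a large rectangle, and let $P_{ac}$ be any grid path from $a$ to $c$ that stays above the line $y=1$ except at its endpoints. Then the cell with center $(2,0)$ lies on $\overline{ac}$ and satisfies $d(b,\cdot)=1\le 2=d(a,\cdot)=d(c,\cdot)$, and the cell $(2,-1)$ below $\overline{ac}$ satisfies $d(b,\cdot)=2\le 3=d(a,\cdot)=d(c,\cdot)$; both are in $V_b$, and the step $(2,0)\to(2,-1)$ increases $d(b,\cdot)$. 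So an outward $V_b$-preserving walk can exit $R$ through $\overline{ac}$ and never touch $P_{ac}$. In fact, at $(2,0)$ the \emph{only} outward neighbour that stays in $V_b$ is $(2,-1)$: the neighbours $(1,0)$ and $(3,0)$ have $d(a,\cdot)=1<2=d(b,\cdot)$ and $d(c,\cdot)=1<2=d(b,\cdot)$ respectively, and $(2,1)=b$ decreases $d(b,\cdot)$. So no choice of outward step keeps you inside $R$ here. The second gap you flag---that some outward $V_b$-neighbour always exists---is also left as an unjustified assertion, and the example shows the two issues are entangled.

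The paper sidesteps all of this by choosing $x$ explicitly rather than searching for it. After normalising coordinates so that $a$ is at the origin, $c$ is in the first quadrant, and $b$ is above $\overline{ac}$, it takes $x$ to be the cell of $P_{ac}$ directly above $b$ (or directly to the left of $b$, by a symmetric case split on the slope of $\overline{ac}$ and the position of $b$ relative to the bounding box of $a$ and $c$). Because $P$ is simple and the boundary does not cross $\overline{ac}$, the enclosed region contains no boundary, so the vertical segment from $b$ to $x$ lies in $P$ and $d(b,x)$ equals the coordinate difference. The inequalities $d(a,x)\ge d(b,x)$ and $d(c,x)\ge d(b,x)$ then follow from short $L^1$ computations. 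If you want to salvage your approach, the natural fix is precisely this: walk straight up from $b$ rather than in an uncontrolled outward direction, at which point you have essentially reproduced the paper's argument.
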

\begin{proof}
	Without loss of generality, assume that the center of $a$ is placed on the origin, $c$~lies in the first quadrant, and $b$ is above the line through the centers of $a$ and $c$; see~\cref{fig:rest-budget-2} for an illustrative example. 
	
	\begin{figure}[htb]
		\centering
		\includegraphics[scale=1]{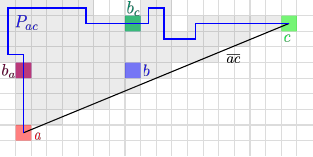}
		\caption{Location of unit squares used in the proof of~\cref{lem:rest_budget_triangle}.}
		\label{fig:rest-budget-2}
	\end{figure}
	
	If $b$ is above $c$, then let $x$ be the unit square on $P_{ac}$ directly above~$b$.
	As $P$ is simple, and the boundary of $P$ does not cross $\overline{ac}$, the area enclosed within $P_{ac}\circ \overline{ac}$ does not contain any boundary piece of $P$. Thus, the path in $G_P$ from $b$ to $x$ is a straight line segment, and we have $\rb(b,x) \geq \rb(a,x)$ and $\rb(b,x) \geq \rb(c,x)$, as required.
	Symmetrically, if $b$ is to the left of $a$, then let $x$ be the unit square on $P_{ac}$ directly to the left of $b$, and again we have $\rb(b,x) \geq \rb(a,x)$ and $\rb(b,x) \geq \rb(c,x)$, as required.
	
	The only case left is when $b$ lies in the axis-aligned bounding box of $a$ and $c$. 
	In this case, let $b_c$ be the unit square on $P_{ac}$ directly above $b$, and $b_a$ be the unit square on $P_{ac}$ directly to the left of $b$. We show that if the slope of $\overline{ac}$ is at most $1$ (i.e., $\frac{c_y}{c_x}\le 1$), then $\rb(b,b_c) \geq \rb(a,b_c)$ and $\rb(b,b_c) \geq \rb(c,b_c)$. The case when the slope is larger than $1$ is argued symmetrically, using the square $b_a$ instead of $b_c$.
	Denote the center point of~$b$ by $(b_x,b_y)$, and the center point of~$c$ by $(c_x,c_y)$. Let $y(b_c)$ be the $y$-coordinate of the center point of $b_c$, then $d(b,b_c)=y(b_c)-b_y$.
	It is easy to see that $\rb(b,b_c) \geq \rb(a,b_c)$, because $d(a,b_c)\ge b_x+y(b_c)>y(b_c)-b_y=d(b,b_c)$.
	To show that $\rb(b,b_c) \geq \rb(c,b_c)$, notice that since $c_y\le c_x$ and $b$ is above $\overline{ac}$, we have $c_x-b_x\ge c_y-b_y$.
	If $y(b_c)\ge c_y$ then $d(c,b_c)\ge c_x-b_x+y(b_c)-c_y\ge c_y-b_y+y(b_c)-c_y=y(b_c)-b_y=d(b,b_c)$. Else, $y(b_c)< c_y$, then $d(c,b_c)\ge c_x-b_x+c_y-y(b_c)\ge c_y-b_y+c_y-y(b_c)>y(b_c)-b_y=d(b,b_c)$.
\end{proof}

\subsection{Simple Polyominoes}\label{sec:vc-sim}
In this section, we investigate the VC dimension of $k$-hop visibility in simple polyominoes.
In particular, we show the following.

\begin{theorem}\label{th:vc-sim}
	For any $k\in \mathbb{N}$, the VC dimension of $k$-hop visibility in a simple polyomino is~$3$.
\end{theorem}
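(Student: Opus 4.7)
The plan is to prove both directions of the equality separately.

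For the lower bound (VC dimension at least $3$), I would construct an explicit simple polyomino together with three guards $g_1,g_2,g_3$ that are shattered under $k$-hop visibility. A natural candidate is a ``tripod'' polyomino: three arms of length roughly $k$ emerging from a small central block, with one guard placed in each arm at a carefully tuned distance from the center. I would then specify, for each of the eight subsets $D_i\subseteq\{g_1,g_2,g_3\}$, an explicit viewpoint realizing the pattern $D_i$: viewpoints near the center see all three guards; viewpoints placed partway into an arm switch off the guards in the opposing arms (because their hop distance exceeds $k$) while retaining a calibrated subset; viewpoints near a tip see only the local guard or no guard at all. The verification is a finite combinatorial check, and the only real subtlety is tuning the arm lengths and guard positions so that the hop distances work out simultaneously for all eight patterns, with possibly a separate ad hoc construction for small $k$.

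For the upper bound (VC dimension at most $3$), I would argue by contradiction: assume four guards $g_1,g_2,g_3,g_4$ in a simple polyomino $P$ are shattered, and exhibit a subset whose viewpoint cannot exist. By Radon's theorem applied to the center points of the four guards, either one guard lies strictly inside the triangle spanned by the other three, or the four guards form a convex quadrilateral.

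In the first case, say $g_4$ lies inside the triangle $g_1g_2g_3$, I would consider the hypothetical viewpoint $u$ realizing $\{g_1,g_2,g_3\}$, apply the Rest-Budget Lemma to a shortest dual-graph path between two of the visible guards through $u$, taking $b:=g_4$ as a point in the enclosed region, and then combine the resulting rest-budget inequality with part (2) of the Rest-Budget Observation to deduce that $u$ must also see $g_4$, a contradiction. In the convex-position case, say with cyclic order $g_1,g_2,g_3,g_4$, I would run the analogous argument starting from the viewpoint for the diagonal subset $\{g_1,g_3\}$: the shortest paths in $G_P$ from that viewpoint to $g_1$ and to $g_3$ enclose a region that must contain at least one of $g_2$ or $g_4$, and the Rest-Budget Lemma again forces an unwanted visibility.

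The main obstacle is discharging the geometric preconditions of the Rest-Budget Lemma, in particular the hypothesis that the relevant straight line segments do not cross the boundary of $P$; this is where simplicity (no holes) is essential, since it guarantees that regions enclosed by a dual-graph path and a segment are simply connected and remain inside $P$. A secondary concern is organizing the case analysis so that every relative placement of four cells in a grid is covered, including degenerate collinear configurations where Radon's theorem applies only in a weak sense and one must argue directly using rest-budget monotonicity along a straight path.
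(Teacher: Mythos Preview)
Your overall skeleton---split by Radon into the ``one guard in the triangle of the other three'' case versus the ``convex quadrilateral'' case, and attack a specific viewpoint with the Rest-Budget Lemma---matches the paper's strategy. But both halves of your argument have genuine gaps.

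\textbf{Lower bound.} A symmetric tripod does not shatter three guards: from the junction you see all three, and stepping into any arm decreases the distance to one guard while increasing the distance to the other two \emph{equally}, so you can never realize a viewpoint that sees exactly two guards. Asymmetric tuning of the three guard--center distances cannot fix all three pair-viewpoints simultaneously (the required inequalities are mutually inconsistent). The paper instead places three guards in a nondegenerate pattern inside a large rectangle so that their $k$-hop-visibility regions are full diamonds, and then the eight subsets are realized exactly as for $L_1$-balls in the plane; you will need something of this flavor rather than a tripod.

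\textbf{Upper bound, convex case.} Looking only at the viewpoint $v_{\{1,3\}}$ is not enough: the two shortest paths $sp(g_1,v_{\{1,3\}})$ and $sp(g_3,v_{\{1,3\}})$ need not enclose $g_2$ or $g_4$ at all (imagine $v_{\{1,3\}}$ lying well outside the quadrilateral, with both paths skirting the same side). The paper's key extra step is to bring in the \emph{second} diagonal viewpoint $v_{\{2,4\}}$ and argue that the concatenated paths $P_{1,3}$ and $P_{2,4}$ cannot share a cell (otherwise a rest-budget comparison at the shared cell would force an unwanted visibility); only then is one of the two paths forced to ``go around'' a guard of the other pair, putting that guard inside the region where the Rest-Budget Lemma applies.

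\textbf{The segment hypothesis.} Your final paragraph misidentifies what simplicity buys. A simple polyomino can certainly have its boundary pierce the straight segment $\overline{g_ig_j}$ between two of your guards (think of an L-shape), so the precondition of the Rest-Budget Lemma is \emph{not} automatically satisfied. The paper must handle this explicitly: when the boundary pierces $\overline{g_2g_4}$, it uses simplicity to argue that the boundary must also cut one of $\overline{g_2g_3}$ or $\overline{g_3g_4}$, forcing yet another path (e.g.\ $P_{2,3}$) to detour around the same exterior obstacle, and then derives a contradiction from rest-budget comparisons inside a boundary-free region bounded by pieces of these detoured paths. This secondary argument is where most of the work in the upper bound actually lives, and your proposal does not contain it.
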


\begin{proof}
	A lower bound construction with $k=1$ and three guards, indicated by the green $1$, the blue $2$, and the red $3$, is visualized in~\cref{fig:vc-dim}.
	
	\begin{figure}[htb]
		\centering
		\includegraphics[scale=0.7]{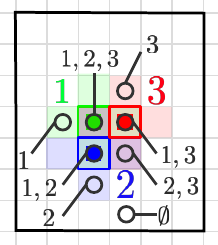}
		\caption{\label{fig:vc-dim}
			A lower bound construction for the VC dimension of $k$-hop visibility in simple polyominoes with $k=1$. 
			The  positions of the guards $1,2,3$ are shown as squares in green, blue, and red, respectively. 
			The $k$-hop-visibility regions are shown in a light shade of those colors. 
			The~\num{8}~viewpoints are indicated by circles, and labeled accordingly.
		}
	\end{figure}
	All $2^3=8$ viewpoints are highlighted and denoted by the guards that see each viewpoint.  For larger $k$,  we keep the placement of guards, but the polyomino will be a large rectangle that contains all $k$-hop-visibility regions.  Because of the relative position of the guards they are shattered as before.
	
	We now show that four guards cannot be shattered in simple polyominoes.
	To this end, consider four guards $g_1,g_2,g_3,g_4$ to be placed in the simple polyomino, and denote the potential viewpoints as $v_S$ with $S\subseteq\{1,2,3,4\}$. 
	For two unit squares $x,y\in P$, we denote by $sp(x,y)$ a shortest path between $x$ and $y$ in $G_P$. 
	For $i,j\in\{1,2,3,4\}$, let $P_{i,j}=sp(g_i,v_{\{i,j\}})\circ sp(v_{\{i,j\}},g_j)$.
	We now distinguish two cases depending on how many of the four guards lie on their convex hull.
	
	\subparagraph{Case 1: All four guards lie in convex position}
		That is, the four center points of their corresponding grid squares are in convex position. Pick any guard and label it~$g_1$ and then label the other in clockwise order around the convex hull $g_2, g_3$ and~$g_4$, see~\cref{fig:vc-simple-a-2}. 
		
		\begin{figure}[htb]
			\centering
			\begin{subfigure}[b]{.5\textwidth}%
				\centering
				\includegraphics[scale=0.75]{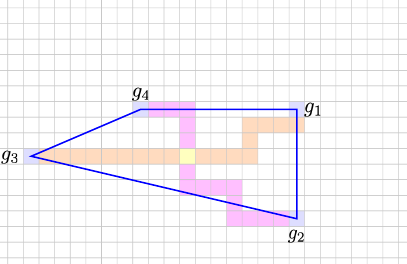}
				\caption{}
				\label{fig:vc-simple-a-2}
			\end{subfigure}\hfill
			\begin{subfigure}[b]{.5\textwidth}%
				\centering
				\includegraphics[scale=0.75]{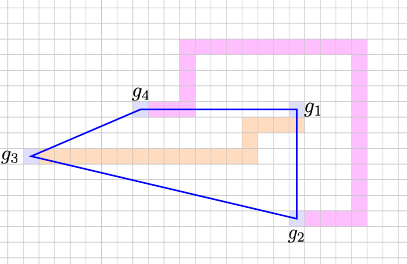}
				\caption{}
				\label{fig:vc-simple-b-2}
			\end{subfigure}\hfill\vspace{0.2cm}
			\begin{subfigure}[b]{.5\textwidth}%
				\centering
				\includegraphics[scale=0.75]{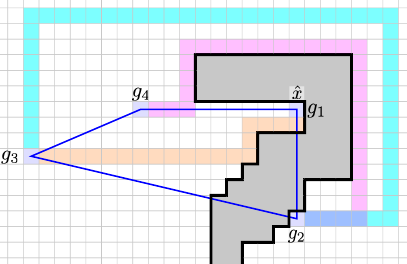}
				\caption{}
				\label{fig:vc-simple-c-2}
			\end{subfigure}\hfill
			\caption{\label{fig:vc-simple-2} Case 1 in the proof of~\cref{th:vc-sim},  with $P_{1,3}, P_{2,4}$ and $P_{2,3}$ shown in orange, pink and turquoise, respectively. (a) $P_{1,3}$ and $P_{2,4}$ cannot cross. (b) $P_{2,4}$ goes around~$g_1$. (c) The boundary of $P$ must pierce $\overline{g_2g_4}$ and both $P_{2,4}$ and $P_{2,3}$ must go around a square $\hat{x}$ in the exterior of $P$.}
		\end{figure}
		Assume, without loss of generality, that $g_4$ is to the left of $g_1$ and that~$g_1$ is above the line through $g_2$ and $g_4$.
		First, we claim that the paths $sp(g_i,v_{\{1,3\}})$ and $sp(g_j,v_{\{2,4\}})$ cannot cross each other for $i\in\{1,3\},j\in\{2,4\}$. Indeed, if the paths have a unit square $x$ in common, as depicted in~\cref{fig:vc-simple-a-2}, then one of $g_i$, $g_j$ has a larger rest budget at $x$ (or an equal rest budget). 
		Assume, without loss of generality, that $\rb(g_i,x)\ge \rb(g_j,x)$, then~$g_i$ would also cover $v_{\{2,4\}}$, which is a contradiction as $i\notin \{2,4\}$. Therefore, the paths $P_{1,3}=sp(g_1,v_{\{1,3\}})\circ sp(v_{\{1,3\}},g_3)$ and $P_{2,4}=sp(g_2,v_{\{2,4\}})\circ sp(v_{\{2,4\}},g_4)$ cannot cross, and one of them must ``go around'' a guard in order to avoid a crossing. Without loss of generality, assume that $P_{2,4}$ goes around $g_1$, that is, $g_1$ belongs to the area enclosed within $P_{2,4}\circ \overline{g_2g_4}$; see~\cref{fig:vc-simple-b-2}.
		Assume that the boundary of $P$ does not pierce~$\overline{g_2g_4}$. In this case, as $P$ is simple, we get by \Cref{lem:rest_budget_triangle} that there exists a square~$x$ on $P_{2,4}$ such that $\rb(g_1,x)\ge \rb(g_2,x)$ and $\rb(g_1,x)\ge \rb(g_4,x)$. Hence, $g_1$ covers $v_{\{2,4\}}$, a contradiction.
		
		We therefore assume that the boundary of $P$ does pierce $\overline{g_2g_4}$, as exemplified in~\cref{fig:vc-simple-c-2}, and, hence, there exists a square $\hat{x}\notin P$, which blocks $g_1$ from reaching the square $x$ on $P_{2,4}$ from \Cref{lem:rest_budget_triangle}. As $P$ is simple, the boundary of $P$ must also cross either $\overline{g_2g_3}$ or $\overline{g_3g_4}$ in a way that any path in $P$ between the endpoints of this segment must go around $\hat{x}$. In other words, assume, without loss of generality, that the boundary of $P$ crosses $\overline{g_2g_3}$. Then there exists a path in the exterior of~$P$ connecting $\overline{g_2g_3}$ and~$\hat{x}$, and because $P$ is simple, any path in $P$ from $g_2$ to $g_3$ must go around~$\hat{x}$. In~particular, the path $P_{2,3}=sp(g_2,v_{\{2,3\}})\circ sp(v_{\{2,3\}},g_3)$ also goes around~$\hat{x}$. 
		We get that both $P_{2,3}$ and $P_{2,4}$ go around $\hat{x}$; however, $sp(g_3,v_{\{2,3\}})$ and $sp(g_4,v_{\{2,4\}})$ cannot intersect. Moreover, consider the region $A^2_{3,4}$ that is enclosed by $\overline{g_3g_4}\circ sp(g_3,v_{\{2,3\}})\circ sp(v_{\{2,3\}},v_{\{2,4\}})\circ sp(g_4,v_{\{2,4\}})$, and assume that $sp(g_3,v_{\{2,3\}})$ is above $sp(g_4,v_{\{2,4\}})$ (the other case is argued analogously). As~$P$ is simple, the region $A^2_{3,4}$ does not contain any polyomino boundary. Consider the line $\ell$ through $g_4$ of slope~$-1$. If $g_3$ is below $\ell$, then for any unit square~$s$ to the right of $g_4$ inside $A^2_{3,4}$, we have $\rb(g_4,s)\ge \rb(g_3,s)$. As~$g_3$ also lies below $\overline{g_2g_4}$ and to the left of $g_4$ (and because $k$-hop-visibility regions are diamond-shaped when they do not intersect the boundary), we get that $g_4$ reaches $v_{\{2,3\}}$, a contradiction.
		On~the other hand, if $g_3$ is above $\ell$, consider the region $A^1_{3,4}$ enclosed by $\overline{g_3g_4}\circ sp(g_3,v_{\{1,3\}})\circ sp(v_{\{1,3\}},v_{\{1,4\}})\circ sp(g_4,v_{\{1,4\}})$ and assume that $sp(g_3,v_{\{1,3\}})$ is below $sp(g_4,v_{\{1,4\}})$. By the same arguments, the region~$A^1_{3,4}$ does not contain any polyomino boundary, and for any square $s$ below $g_4$ inside $A^1_{3,4}$, we have $\rb(g_4,s)\ge \rb(g_3,s)$. 
		In~this case, $g_4$ reaches $v_{\{1,3\}}$, a contradiction.
		
	\subparagraph{Case 2: Exactly three guards lie on their convex hull}
		That is, the three center points of their corresponding grid squares are in convex position, and the center point of the fourth guard lie in the convex hull. We label the three guards on the convex hull $g_1,g_2,g_3$, and the guard that is placed inside the convex hull is labeled $g_4$. 
		
		We show that the viewpoint~$v_{\{1,2,3\}}$ is not realizable.
		Let $T$ be the triangle of grid points that connects the centers of $g_1, g_2$,~and~$g_3$.
		Consider the three shortest paths connecting $g_1,g_2,g_3$ to $v_{\{1,2,3\}}$. As $g_4$ lies in~$T$, for any placement of $v_{\{1,2,3\}}$, we would get that for some $i,j\in\{1,2,3\}$, the area enclosed within $sp(g_i,v_{\{1,2,3\}})\circ sp(v_{\{1,2,3\}},g_j)\circ \overline{g_ig_j}$ contains the center point of $g_4$.
		If the boundary of $P$ does not pierce~$T$, then, similar to Case~1, we get by \Cref{lem:rest_budget_triangle} that $g_4$ reaches $v_{\{1,2,3\}}$, a~contradiction. Otherwise, assume that the convex hull of the three guards is pierced by the boundary. Then it is possible to realize the $v_{\{1,2,3\}}$ viewpoint.  However, similar to the argument in Case~1, the boundary will prevent the realization of a viewpoint of~$g_4$ and one of the other guards ($g_4$ taking the role of $g_1$ from Case~1 here).
\end{proof}

\subsection{Polyominoes with Holes}\label{sec:vc-hol}

Aronov et al.~\cite{adep-opdh-21} showed an upper bound of $4$ for the VC dimension in hypergraphs of pseudo disks.  And while, intuitively, one might suspect that $k$-hop-visibility regions of unit squares in polyominoes with holes are pseudo disks; that is not the case, as illustrated in~\cref{fig:vc-holes-lb-a}.  

\begin{figure}[htb]
			\centering
			\includegraphics[scale=0.7]{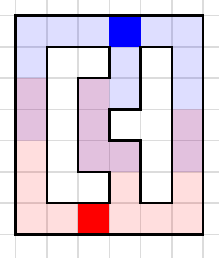}
			\caption{The $k$-hop-visibility regions ($k=6$) of two guards intersect more than twice.}
			\label{fig:vc-holes-lb-a}
\end{figure}%

Hence, we need to show an upper bound for the VC dimension in this case in another way. 
In fact, even here, we provide matching upper and lower bounds. 
These are valid for large enough values of $k$ (e.g., for $k=1$ we do not gain anything from the holes). 
In particular, we show the following. 

\begin{theorem}\label{th:vc-hol}
	For large enough $k\in \mathbb{N}$, the VC dimension of $k$-hop visibility in a polyomino with holes is $4$.
\end{theorem}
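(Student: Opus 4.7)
The proof of \cref{th:vc-hol} splits into a lower bound (exhibit a polyomino with holes in which four unit squares can be shattered) and a matching upper bound (no five unit squares can be shattered in any polyomino with holes). For the lower bound, my plan is to place four guards roughly at the corners of a large axis-aligned square and to insert holes in its interior that redirect shortest paths. The holes are positioned so that, for every subset $S\subseteq\{1,2,3,4\}$, one can locate a cell $v_S$ at which exactly the guards in $S$ have hop distance at most $k$: guards in $S$ route a path ``straight'' to $v_S$, while guards outside $S$ are forced around a hole and therefore exceed $k$. The parameter $k$ must be chosen large enough for the detours around the holes to cost strictly more than $k$ extra steps, which is precisely why the statement quantifies over large enough~$k$. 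Verification then reduces to a case-by-case hop-distance count for each of the $16$ viewpoints.

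For the upper bound, the plan is to assume five guards $g_1,\dots,g_5$ are shattered and derive a contradiction. The argument follows the spirit of the case analysis in \cref{th:vc-sim}, but must be strengthened to cope with holes. As before, I would split by the convex-hull structure of the five guard centers and, within each case, focus on the shortest-path loops $P_{i,j}=sp(g_i,v_{\{i,j\}})\circ sp(v_{\{i,j\}},g_j)$ for carefully chosen pairs. Whenever two such loops would have to cross, the Rest-Budget Lemma (\cref{lem:rest_budget_triangle}) forces one of the involved guards to cover a viewpoint it should not see, yielding a contradiction. When a hole blocks the lemma, I would argue---mirroring the argument around the square $\hat{x}$ in Case~1 of \cref{th:vc-sim}---that the boundary of the hole compels a second pair of shortest paths to share the same detour, thereby creating a fresh obstruction that rules out another viewpoint in the $2^5=32$-element set. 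Because five guards yield more forced pairs than four, some such obstruction must fire.

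The main obstacle is the upper bound. In simple polyominoes \cref{lem:rest_budget_triangle} is global: no path can escape the pocket enclosed by $\overline{ac}$ and $P_{ac}$, because $P$ is hole-free. In the presence of holes, this guarantee fails, and the proof of \cref{th:vc-sim} cannot be quoted verbatim. I expect the cleanest fix to be a topological, counting-style argument: each hole can serve as a detour pivot for only a bounded number of guard pairs, so after charging each ``lost'' application of the Rest-Budget Lemma to a hole, one still has enough unblocked pairs among the $\binom{5}{2}=10$ pairs to complete the contradiction. This charging is what will make the argument go through for five guards but not for four, matching the lower bound.
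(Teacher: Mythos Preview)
Your lower-bound plan is in the same spirit as the paper's: the paper gives an explicit construction for $k=18$ (and $k=19$) with four guards and several holes, and explains how to stretch corridors to handle all larger~$k$. Your description is less concrete but headed in the same direction.

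The upper bound, however, is where your proposal diverges substantially from the paper, and where it has a real gap. You propose to rerun the convex-hull case analysis of \cref{th:vc-sim} for five guards, patching the failures of \cref{lem:rest_budget_triangle} via a charging argument (``each hole can serve as a detour pivot for only a bounded number of guard pairs''). That lemma is stated only for \emph{simple} polyominoes, and your charging heuristic is not substantiated: nothing prevents a single hole from disrupting many pairs simultaneously, and you give no mechanism that bounds the number of ``lost'' applications in terms of the $\binom{5}{2}$ pairs. As written, this is a hope rather than an argument.

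The paper avoids all of this with a short planarity argument that does not use \cref{lem:rest_budget_triangle} at all---only the elementary \cref{le:rest-budget}. For four guards $g_1,\dots,g_4$, the six paths $P_{i,j}=sp(g_i,v_{\{i,j\}})\circ sp(v_{\{i,j\}},g_j)$ cannot cross one another: if two such paths share a cell, the guard with larger rest budget there would see the other pair's viewpoint. Hence these paths realize a planar drawing of $K_4$ with faces $f_1,\dots,f_4$, where $f_i$ is the face not incident to $g_i$. A fifth guard $g_5$ lies in some face $f_i$; then the path $P_{i,5}$ must cross some $P_{j,\ell}$ with $\{j,\ell\}\cap\{i,5\}=\emptyset$, and \cref{le:rest-budget} at the crossing cell kills either $v_{\{i,5\}}$ or $v_{\{j,\ell\}}$. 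This argument is indifferent to holes (it only needs that the paths live in the plane), which is exactly why it succeeds where your Rest-Budget-Lemma-based plan stalls.
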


\begin{proof}
	A lower bound with $k=18$ is visualized in~\cref{fig:vc-holes-lb-b}: the four guards are indicated by the green $1$, the blue $2$, the red $3$, and the yellow $4$. 
	\begin{figure}[htb]
			\begin{subfigure}[b]{.5\textwidth}
					\centering
					\includegraphics[scale=0.5]{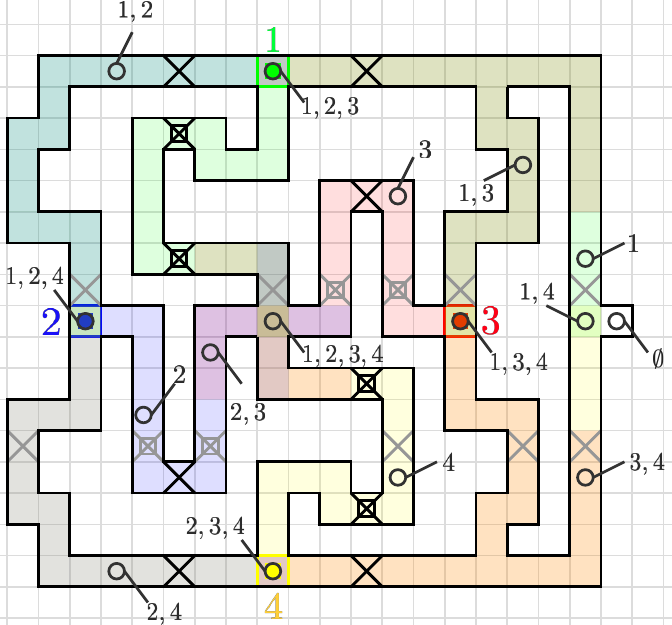}
					\caption{}
					\label{fig:vc-holes-lb-b}
				\end{subfigure}
				\begin{subfigure}[b]{.5\textwidth}
					\centering
					\includegraphics[scale=0.5]{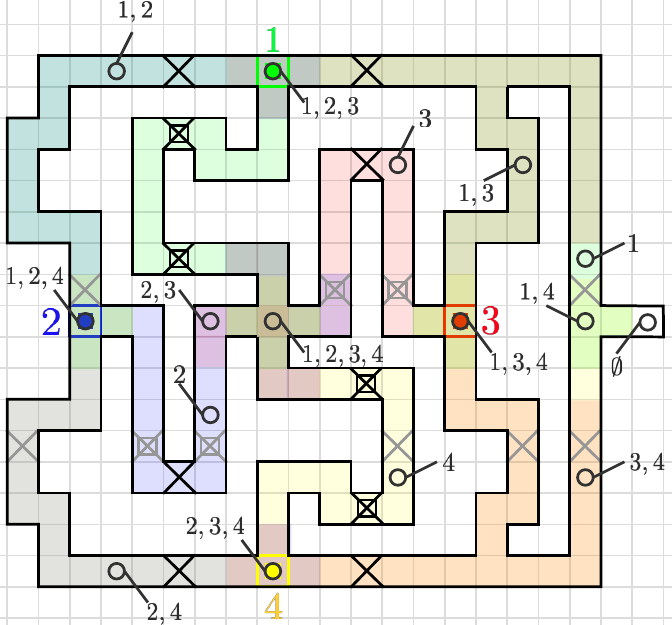}
					\caption{}
					\label{fig:vc-holes-lb-c}
			\end{subfigure}%
			\caption{(a) The lower bound construction for the VC dimension of $k$-hop visibility in non-simple polyominoes for $k=18$.  
						The positions of the four guards $1,2,3,4$ are shown as squares in green, blue, red, and yellow, respectively. 
						Visibility regions are shown in a light shade of those colors. 
						The \num{16}~viewpoints are indicated by circles, and labeled accordingly. The gray and black $\times$'s and boxed $\times$'s indicate where we insert $2$~and $1$ unit squares, respectively, to increase the value of $k$ by~$2$. We alternate between using the gray and black markings.  (b) The lower bound construction for the VC~dimension of $k$-hop visibility in non-simple polyominoes for $k=19$.  The colors are used as in (a).}
			\label{fig:lower-bound-vc-holes}
	\end{figure}
	We highlighted the $2^4=16$ viewpoints, and denoted them with the respective guard labels that see them.  For larger (even) values of $k$, we extend the corridors in~\cref{fig:vc-holes-lb-b} at the location marked by~``$\times$'': We alternate between using the gray and black unit squares.  At locations with a simple~``$\times$'', we insert two unit squares, at locations with a boxed ``$\times$'', we insert a single unit square. 
	One can verify that by alternating between the gray and black insertions for $k=18+2i$,  all viewpoints are realized.
	Similarly,  a lower bound with $k=19$ is shown in~\cref{fig:vc-holes-lb-c}.  Again we use the squares marked by~``$\times$'' for inserting one or two unit squares to extend corridors for $k=19+2i$.
	
	For the upper bound, assume that we place a set with five unit-square guards~$g_1, g_2, g_3, g_4, g_5$ that can be shattered. 
	We denote viewpoints as $v_S$ with $S\subseteq\{1,2,3,4,5\}$. 
	Let $P_{i,j}, i\neq j\in\{1,\ldots,5\}$ denote the shortest path from the guard~$g_i$ to a guard $g_j$ along which the viewpoint $v_{\{i,j\}}$ is located. 
	In particular, $P_{i,j}$ includes the shortest paths from $g_i$ to $v_{\{i,j\}}$ and from $g_j$ to $v_{\{i,j\}}$, as this determines the rest budget for both guards at $v_{\{i,j\}}$.
	
	We start with four guards $g_1, \ldots, g_4$. 
	To generate all the ``pair'' viewpoints, $v_{\{i,j\}}$, $\{i,j\}\subseteq\{1,2,3,4\}$, we need to embed the graph $G_{4p}$, as shown in~\cref{fig:vc-holes-lb-d}, where each edge represents a path $P_{i,j}$ (the color of each guard reaches equally far into each edge, e.g., some of the paths reflected in these edges include wiggles); 
	\begin{figure}[htb]
		\centering
		\includegraphics[scale=0.275]{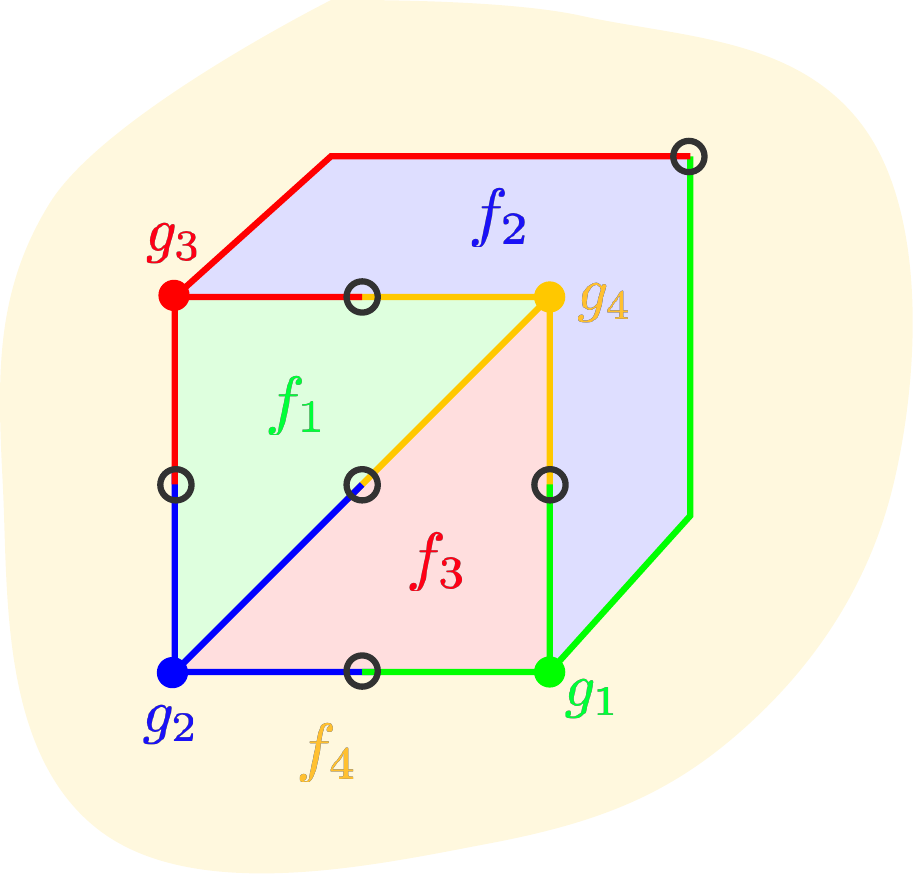}
		\caption{The graph~$G_{4p}$ with vertices $g_1, \ldots, g_4$ shown in green, blue, red, and yellow, respectively. Gray circles indicate ``pair''-viewpoints, and are not vertices of the graph.}
		\label{fig:vc-holes-lb-d}
	\end{figure}
	in particular, $G_{4p}$~contains edges representing the paths $P_{1,2},P_{1,3},P_{1,4},P_{2,3},P_{2,4}$, and $P_{3,4}$, and the face $f_i$ of $G_{4p}$ is incident to all guards $g_j, j\in\{1,2,3,4\}$ except for~$g_i$.
	Of course, in the resulting polyomino, the edges could be embedded in larger blocks of unit squares. 
	However, given the upper bound of $3$ on the VC dimension in simple polyominoes, by~\cref{th:vc-sim}, we know that at least one of the four faces $f_1, \ldots, f_4$ (and in fact one of $f_1, f_2, f_3$) of $G_{4p}$ must contain a hole. 
	A fifth guard $g_5$ must be located in one of the four faces. 
	Let this be face $f_i$. 
	As $g_i$ is not incident to $f_i$, the path $P_{i,5}$ from $g_5$ to $g_i$, must intersect at least one of the other paths represented by the edges in $G_{4p}$; let this be the path~$P_{j,\ell}$. 
	By~\cref{le:rest-budget}, one of the viewpoints $v_{\{i,5\}}$ and $v_{\{j,\ell\}}$ cannot be realized, as a guard from the other pair will always see such a viewpoint too; a~contradiction.
\end{proof}

\section{NP-completeness for 1-Thin Polyominoes with Holes}\label{sec:np}
In this section, we show that the decision version of the M$k$GP is \NP-complete, even in \num{1}-thin polyominoes with holes. 
However, as the dual graph of a \num{1}-thin polyomino without holes is a tree,  an optimal solution can be obtained in linear time~\cite{ack-ltamk-22,KM16}.

For showing \NP-hardness of our problem, we utilize \planarsat which de Berg and Khosravi~\cite{bk-obsps-12} proved to be \NP-complete, and that can be stated as follows.

\begin{problem}[\planarsat]
	Let $X=\{x_1, \ldots, x_n\}$ be a set of Boolean variables and $\varphi=C_1 \wedge C_2 \wedge\ldots \wedge C_m$ be a formula in conjunctive normal form defined over these variables, where each clause $C_i$ is the disjunction of at most three variables. 
	Let each clause be monotone, i.e., each clause consists of only negated or unnegated literals, and 
	let the bipartite variable-clause incidence graph be planar with a rectilinear embedding. 
	This constitutes a monotone, rectilinear representation of a \planarsat instance. 
	Decide whether the instance is satisfiable. 
\end{problem}

In the remainder of this section, we prove the following theorem.

\begin{theorem}\label{th:np}
	For every $k\geq 2$, the decision version of the M$k$GP is \NP-complete, even in~\num{1}-thin polyominoes with holes.
\end{theorem}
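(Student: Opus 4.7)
\emph{Membership in \NP} is immediate: given a candidate guard set of size at most $\ell$, one runs a BFS of depth $k$ in $G_P$ from each guard and checks in polynomial time that every cell is reached.

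For \NP-hardness I would reduce from \planarsat, using its monotone rectilinear embedding so that positive clauses lie above and negative clauses below the variable row, and wires to the variables can be routed without crossings — planarity is essential since the polyomino sits in the plane. I would fix the global length unit to $p=2k+1$, the diameter of an unobstructed $k$-hop region along a straight corridor, and build the instance from three gadget types connected by unit-wide corridors. 1-thinness is enforced by construction: all corridors are one cell wide, and junctions are arranged in plus/T shapes separated by small holes, which prevents any $2\times 2$ block.

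\emph{Variable gadget:} a dual cycle around a hole whose length is a carefully chosen multiple of $p$, decorated with two short ``pinches'' that cut the rotational symmetry of optimum $k$-hop dominating sets on the cycle down to exactly two configurations, distinguished by the residue class $\bmod p$ of the guard positions; these encode True and False.
\emph{Wire gadget:} a straight corridor of length a multiple of $p$ running from the variable loop to a clause; at optimum density the residue class of guard positions is forced to be the same at both ends, so the variable's choice propagates to the clause. A wire for a positive (respectively negative) literal is attached so that the True (respectively False) residue leaves a designated ``check cell'' at the clause end within $k$ hops of the wire's terminal guard.
\emph{Clause gadget:} a single check cell placed at the meeting point of the (up to three) incoming wires, positioned so that it is $k$-hop visible from a wire's terminal guard \emph{only} when that wire carries the satisfying residue; the gadget itself contributes no extra guards to the local optimum, so covering the check cell genuinely requires at least one wire to be in its satisfying state.

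Setting the budget $N$ to the sum of the local minima of all variable loops and all wires, one direction is easy: a satisfying assignment picks the corresponding residue in each variable loop, propagates it along the wires, and covers every check cell via a satisfied literal, using exactly $N$ guards. Conversely, a guard cover of size $N$ must attain the local minimum in every gadget, which forces the residues to be consistent (yielding a single truth value per variable) and forces each check cell to be covered by a terminal wire guard in the satisfying residue, giving a satisfying assignment. \emph{The main obstacle} is the joint design of the pinched variable loop and the clause junction: one must rule out ``sliding'' of guards around the cycle that would produce a third minimum configuration, and prevent a local guard near a clause from covering the check cell ``for free''. Both demand a careful case analysis of how an optimum $k$-hop dominating set can be locally shifted in and around the gadgets, which is the delicate technical core of the reduction; keeping all these local arguments consistent with the global 1-thinness constraint (no $2\times 2$ block anywhere, in particular at clause junctions) is what makes the construction subtle rather than routine.
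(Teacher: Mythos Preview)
Your high-level plan---reduce from \planarsat, build a $1$-thin polyomino from variable, wire and clause gadgets, and set the guard budget to the sum of local optima---is exactly the skeleton the paper follows. Membership in \NP{} is argued identically. The differences are in the gadgets themselves, and here your proposal is only a sketch whereas the paper gives explicit constructions.

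Concretely, the paper's variable gadget is \emph{not} a pinched cycle around a hole; it is an acyclic four-pronged piece (two prongs exit to wires, the other two are joined by a short corridor with two one-cell niches) that admits exactly two optimal placements of five guards. Fan-out to several clauses is handled by a separate \emph{split gadget} that duplicates the truth value along a wire, not by multiple taps on a cycle. The clause gadget is a trident whose interior needs two additional guards when at least one incoming wire is ``true'' and three otherwise; hence the budget is $5V+5C+W/(2k+1)$, not ``zero extra guards per clause'' as in your check-cell design. These choices let the paper verify correctness by a short, finite case analysis of each gadget (essentially the pictures), with no global sliding argument.

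The point where your plan is genuinely incomplete is the one you flag yourself: getting a cycle of length $mp$ (with $p=2k+1$) down from $p$ rotationally equivalent optima to exactly two via ``two short pinches'' is not automatic---a spur of length $k-1$ pins a guard to a window of three residues, and intersecting two such windows to yield precisely two residues, while simultaneously making every wire tap read the correct residue and keeping the construction $1$-thin at every junction, is the whole difficulty. Likewise, your single check-cell clause must be placed so that each of up to three incoming terminal guards is at distance exactly $k$ in the satisfying residue and $k{+}1$ otherwise, at a $T$- or $+$-junction, without the cell being absorbed into some wire's local optimum; this is doable in principle but needs an explicit layout, not a description. In short: the reduction strategy is right and matches the paper, but the paper solves the ``delicate technical core'' with different, fully specified gadgets (prong variable, explicit split, trident clause), and your proposal still owes those constructions.
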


\begin{proof}
	Membership in \NP\@ follows easily, as we can verify in polynomial time for a given proposed solution whether each square of the respective polyomino is covered by the guards.
	Hence, it remains to show \NP-hardness. 
	
	Our reduction is from \planarsat.
	Given an instance $\varphi$ of \planarsat with incidence graph $G_\varphi$, we show how to turn a rectilinear, planar embedding of $G_\varphi$ into a polyomino~$P$, such that a solution to the M$k$GP in $P$ yields a solution to $\varphi$, thereby showing \NP-hardness. 
	At a high level, our reduction consists of four gadgets: variable gadgets to represent the variables of $\varphi$, split~gadgets to duplicate variable assignments, wire gadgets to connect variables to clauses, and clause gadgets to form the clauses of $\varphi$.
	
	\subparagraph{Variable Gadget} The \emph{variable gadget} is shown in~\cref{fig:np-var-a} for $k=2$: a four-pronged polyomino structure, where two prongs end in corridors to other gadgets, while the other two prongs are connected by a corridor with two unit-square niches.
	
	\begin{figure}[htb] 
		\centering
		\begin{subfigure}[b]{0.33\textwidth}
			\centering
			\includegraphics[scale=0.6]{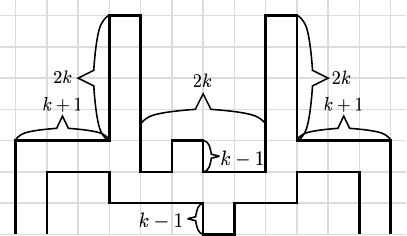}
			\caption{}
			\label{fig:np-var-a}
		\end{subfigure}%
		\begin{subfigure}[b]{0.33\textwidth}
			\centering
			\includegraphics[scale=0.6]{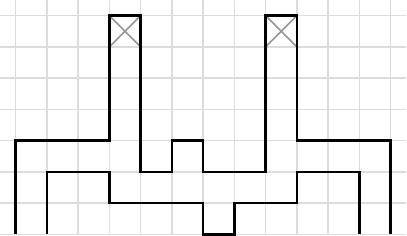}
			\caption{}
			\label{fig:np-var-b}
		\end{subfigure}%
		\begin{subfigure}[b]{0.33\textwidth}
			\centering
			\includegraphics[scale=0.6]{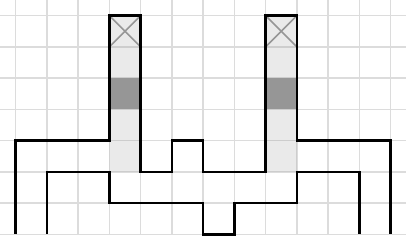}
			\caption{}
			\label{fig:np-var-c}
		\end{subfigure}\\\vspace*{0.2cm}
		\begin{subfigure}[b]{0.33\textwidth}
			\centering
			\includegraphics[scale=0.6]{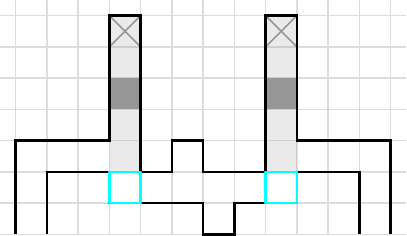}
			\caption{}
			\label{fig:np-var-d}
		\end{subfigure}%
		\begin{subfigure}[b]{0.33\textwidth}
			\centering
			\includegraphics[scale=0.6]{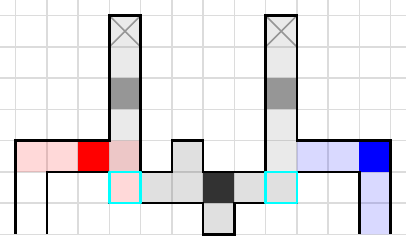}
			\caption{}
			\label{fig:np-var-e}
		\end{subfigure}%
		\begin{subfigure}[b]{0.33\textwidth}
			\centering
			\includegraphics[scale=0.6]{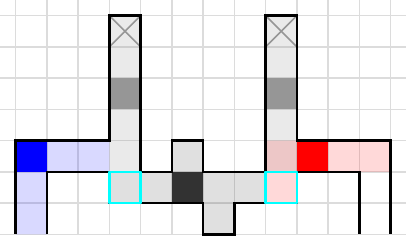}
			\caption{}
			\label{fig:np-var-f}
		\end{subfigure}%
		\caption{\label{fig:np-var} (a) Variable gadget for $k=2$. To cover the two marked squares in (b), the furthest into the gadget we can place guards are the two locations shown in (c). (d)~the two turquoise squares within distance  $2(k+1)$ are then still uncovered and a single guard cannot cover both of these. (e)~if a guard in the lower corridor covers the right turquoise square,  the visibility region of the right guard extends two squares into the right corridor, the red guard must cover the left turquoise square and cannot see into the left corridor. (f) shows a mirrored situation to that in (e).}
	\end{figure} 
	To cover the two marked squares in~\cref{fig:np-var-b}, the guard positions that cover the largest number of squares of the gadget are given in~\cref{fig:np-var-c}. 
	The $k$-hop-visibility regions of these guards leave a width-$(2k+2)$ corridor of uncovered squares at the bottom of the gadget. 
	The two extremal squares of that corridor,  highlighted in turquoise in~\cref{fig:np-var-d}, cannot be covered by a single guard. 
	Because of the two square niches attached to the width-$(2k+2)$ corridor, the only two positions for a potential guard that allow us to cover all of the variable gadget with five guards are vertically below and over these two niches. 
	If we pick the right of these positions, as in~\cref{fig:np-var-e}, we cover the right turquoise square, the last uncovered square on that side has distance $2$ to this turquoise square, and we can place a (blue) guard that sees distance $2$ into the right vertical corridor exiting the gadget.  
	The left turquoise square remains uncovered by the guard in the width-$(2k+2)$ corridor, hence, a guard placed in the left horizontal corridor that covers this square cannot extend into the left corridor exiting the gadget. 
	\cref{fig:np-var-f} shows the mirrored case. 
	We have exactly two sets of five guards placed within the variable gadget that cover the complete gadget (four guards are not sufficient), one refers to setting the variable to true (as shown in~\cref{fig:np-var-e}), the other to setting it to false (as in~\cref{fig:np-var-f}).
	
	\subparagraph{Wire and Split Gadget} The \emph{wire gadget} is simply an extended corridor from the two vertical corridor exits of a variable gadget.  
	A wire may be split using the \emph{split gadget} shown in~\cref{fig:np-split-a,fig:np-split-b}. 
	The split is located at a guard position of one truth setting (here the blue in~\cref{fig:np-split-a}), hence, that square is the last square covered by a guard representing the other assignment. 
	Given that our guards can look around an arbitrary number of corners (depending on $k$), we can bend our wires at any position.
	
	\begin{figure}[htb]
		\centering
		\begin{subfigure}[b]{.5\textwidth}
			\centering
			\includegraphics[scale=0.55]{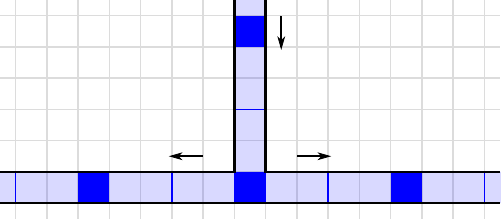}
			\caption{}
			\label{fig:np-split-a}
		\end{subfigure}%
		\begin{subfigure}[b]{.5\textwidth}
			\centering
			\includegraphics[scale=0.55]{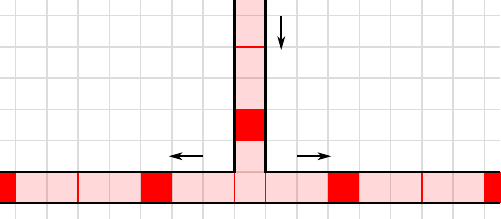}
			\caption{}
			\label{fig:np-split-b}
		\end{subfigure}%
		\caption{\label{fig:np-split} Split gadget for $k=2$ with the two possible truth assignments.}
	\end{figure}
	
	\subparagraph{Clause Gadget} 
	The \emph{clause gadget} is shown in~\cref{fig:np-clause-a} for $k=2$: a trident, with distance $k+1$ between its prongs.
	\begin{figure}[htb] 
		\centering
		\begin{subfigure}[b]{0.25\textwidth}
			\includegraphics[scale=0.55]{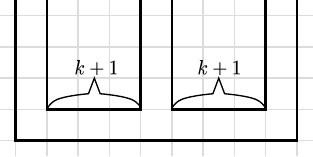}
			\caption{}
			\label{fig:np-clause-a}
		\end{subfigure}\hfill
		\begin{subfigure}[b]{0.25\textwidth}
			\includegraphics[scale=0.55]{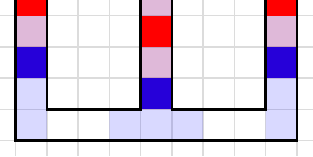}
			\caption{}
			\label{fig:np-clause-b}
		\end{subfigure}\hfill
		\begin{subfigure}[b]{0.25\textwidth}
			\includegraphics[scale=0.55]{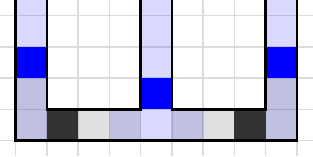}
			\caption{}
			\label{fig:np-clause-c}
		\end{subfigure}\hfill
		\begin{subfigure}[b]{0.25\textwidth}
			\includegraphics[scale=0.55]{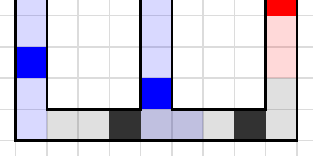}
			\caption{}
			\label{fig:np-clause-d}
		\end{subfigure}\hfill\\\vspace*{0.2cm}
		\begin{subfigure}[b]{0.25\textwidth}
			\includegraphics[scale=0.55]{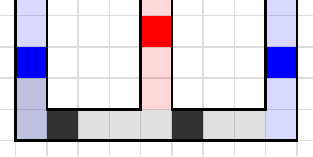}
			\caption{}
			\label{fig:np-clause-e}
		\end{subfigure}\hfill
		\begin{subfigure}[b]{0.25\textwidth}
			\includegraphics[scale=0.55]{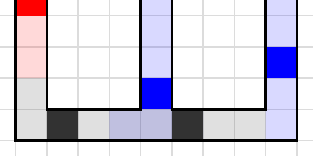}
			\caption{}
			\label{fig:np-clause-f}
		\end{subfigure}\hfill
		\begin{subfigure}[b]{0.25\textwidth}
			\includegraphics[scale=0.55]{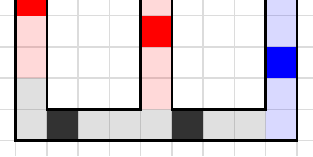}
			\caption{}
			\label{fig:np-clause-g}
		\end{subfigure}\hfill
		\begin{subfigure}[b]{0.25\textwidth}
			\includegraphics[scale=0.55]{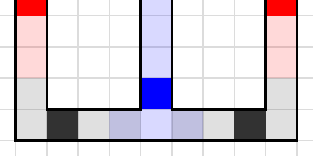}
			\caption{}
			\label{fig:np-clause-h}
		\end{subfigure}\hfill\\\vspace*{0.2cm}
		\begin{subfigure}[b]{0.25\textwidth}
			%% FOR SPACING
		\end{subfigure}\hfill
		\begin{subfigure}[b]{0.25\textwidth}
			\includegraphics[scale=0.55]{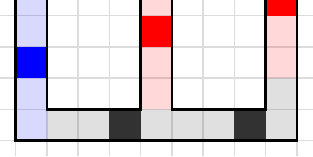}
			\caption{}
			\label{fig:np-clause-i}
		\end{subfigure}\hfill
		\begin{subfigure}[b]{0.25\textwidth}
			\includegraphics[scale=0.55]{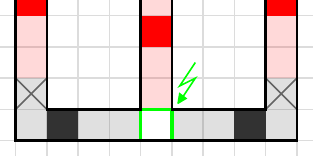}
			\caption{}
			\label{fig:np-clause-j}
		\end{subfigure}\hfill
		\begin{subfigure}[b]{0.25\textwidth}
			%% FOR SPACING
		\end{subfigure}\hfill
		\caption{\label{fig:np-clause} (a) Clause gadget for $k=2$. (b) positions for all possible guards for truth assignments fulfilling the clause (blue, ``T'') and truth assignments not fulfilling the clause (red, ``F''). (c)--(j) show the different truth settings; in all but FFF, two additional guards within the clause are sufficient and necessary to cover the clause. (c)~TTT; (d)~TTF; (e)~TFT; (f) FTT; (g) FFT; (h) FTF; (i) TFF; (j) FFF.}
	\end{figure}
	Each prong is connected (by a wire) to a variable gadget. 
	In~\cref{fig:np-clause-b}, we depict all possible positions of guards in corridors that connect to a variable gadget with an assignment fulfilling the clause in blue, and all possible positions of guards in corridors that connect to a variable gadget with an assignment not fulfilling the clause in red. 
	The guards in the center prong are one square further into their corridor than those in the outer prongs.  
	For odd $k$ this yields an odd distance between guard positions in the center and in either of the outer prongs, which we cannot achieve automatically in a square grid. 
	However,  in a corridor, a guard's visibility region spans $2k+1$ squares,  which is odd for all $k$. 
	Hence, depending on whether we construct a corridor with a length that requires an even or an odd number of guards, we can cover an even or an odd distance.
	
	After having settled that we can achieve the odd distance between guard positions in the center prong and either of the outer prongs, we check the clause gadget: 
	As depicted in \cref{fig:np-clause-c,fig:np-clause-d,fig:np-clause-e,fig:np-clause-f,fig:np-clause-g,fig:np-clause-h,fig:np-clause-i}, in all but one cases of truth-value assignments, two additional guards in the clause gadget are necessary and sufficient to cover the gadget. 
	If the assignment does not fulfill the clause as shown in \cref{fig:np-clause-j}, we obtain a path of uncovered squares of length $(2\cdot(2k+1)+1)$, which is impossible to cover with two guards. 
	Hence, we need three guards for the clause gadget if and only if no variable has a truth setting fulfilling the clause.
	
	\subparagraph{Global Construction} We start from a planar, rectilinear embedding of $G_\varphi$ on an $O(n)\times O(n)$ grid~\cite{bk-bhogd-98}, scaled by a constant factor.  
	Then, we locally replace each clause by one clause gadget, each variable by one variable gadget,  and edges either by a single corridor from a clause gadget to a variable gadget (if the literal appears only in one clause), or by corridors with additional split gadgets.  
	Because we can steer a corridor's length to be even or odd by requiring an even or odd number of guards, we can always place all gadgets and connect them as desired. 
	This construction requires a guard set of size $\Gamma = 5V + 5C + \nicefrac{W}{(2k+1)}$, where $V$ and $C$ is the number of variables and clauses, respectively, and $W$ is the number of unit squares that make up all wire and split gadgets.
	
	\Cref{fig:np-ex} depicts an exemplary construction of the polyomino for the Boolean formula $\varphi=(\overline{x_1}\vee \overline{x_2} \vee \overline{x_3})\wedge (x_1\vee x_2 \vee x_4) \wedge(\overline{x_1}\vee\overline{x_3} \vee \overline{x_4})$ for $k=2$.
	
	\begin{figure}[htb]
		\centering
		\includegraphics[scale=0.37]{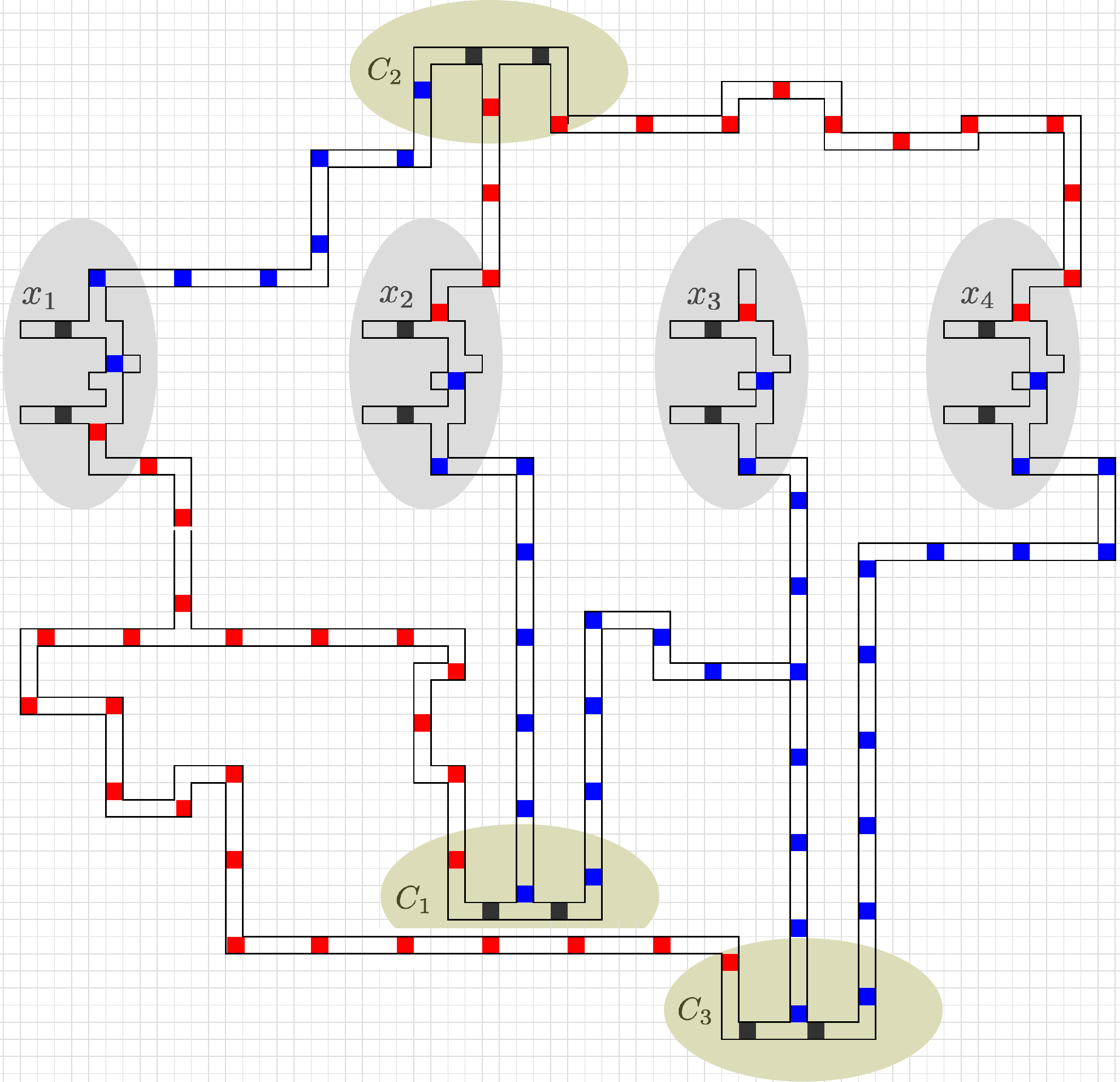}
		\caption{\label{fig:np-ex} The polyomino $P$ as an instance of the M$k$GP derived from the Boolean formula $\varphi=(\overline{x_1}\vee \overline{x_2} \vee \overline{x_3})\wedge (x_1\vee x_2 \vee x_4) \wedge(\overline{x_1}\vee\overline{x_3} \vee \overline{x_4})$, $k=2$.  We set $x_1=T$, and $x_2=x_3= x_4=F$.}
	\end{figure}
	
	\begin{claim}\label{claim:hardness1}
		If $\varphi$ is satisfiable, then there is a guard set of size $\Gamma$ under $k$-hop visibility.
	\end{claim}

	\begin{proof}\let\qed\relax
		Consider a satisfying assignment $\alpha$ for the Boolean formula $\varphi$, and the respective polyomino $P$ derived from $\varphi$ according to the described construction.
		For each variable in $\varphi$, we place five guards within the corresponding variable gadget in $P$ (where we place according to the truth assignment in~$\alpha$, see also~\cref{fig:np-var}).
		After placing these guards, there is a unique placement of guards within the wire and split gadgets (as every cell must be guarded). 
		As~we assume that $\varphi$ is satisfied by the assignment $\alpha$, placing additional five guards according to~\cref{fig:np-clause} suffices to cover~$P$.		
	\end{proof}

	\begin{claim}\label{claim:hardness2}
		If there is a guard set of size $\Gamma$ under $k$-hop visibility, then $\varphi$ is satisfiable.
	\end{claim}
	
	\begin{proof}\let\qed\relax
		Consider a guard set of size $\Gamma$ under $k$-hop visibility for the polyomino $P$.
		To cover all cells that belong to wire and split gadgets, we require at least $W$ many guards. 
		As described previously, every variable gadget requires at least five guards; this accumulates in a total of $5V$ guards, where $V$ again is the number of variables.
		So, we only have $5C$ guards left, where $C$ is the number of clauses.
		As we considered a feasible guard set, every clause needs at least five guards, and every clause can be covered with five guards only if parts of them are already covered by guards placed in incident wires, this induces a corresponding placement of guards within the variables gadgets.
		This placement provides a satisfying assignment for~$\varphi$.
	\end{proof}
	\Cref{claim:hardness1,claim:hardness2} complete the proof of~\cref{th:np}.
\end{proof}
\pagebreak

\section{A Linear Time 4-Approximation for Simple 2-Thin Polyominoes}\label{sec:appx}
As already mentioned, there exists a PTAS for $k$-hop domination in $H$-minor free graphs~\cite{fl-cetlt-21}. 
However, the exponent of $n$ in the runtime may be infeasible for realistic applications, where $n$ is extremely large. 
On the other hand, the exact algorithm for graphs with treewidth~$tw$ has running time $O((2k+1)^{tw}\cdot n)$~\cite{BL16}, which may be too large if $k=\Omega(n)$, even for small $tw$ (in fact, it is not hard to show that $2$-thin polyominoes have constant $tw$: $K_4$ is not a minor, hence, we have $tw=2$ for $2$-thin polyominoes).

Therefore, we present a linear-time $4$-approximation algorithm for the M$k$GP in simple $2$-thin polyominoes, for any value of $k\in \mathbb{N}$.
The runtime of our algorithm does not depend on $k$.
The overall idea is to construct a tree~$T$ on a polyomino $P$, and let $T$ lead us in placing guards in~$P$; this is inspired by the linear-time algorithm for trees by Abu-Affash et al.~\cite{ack-ltamk-22} for the equivalent problem of $k$-hop domination. 
In each iteration step, we place $1$, $2$, or $4$ guards and $1$~witness. 
Because the cardinality of a witness set is a lower bound on the cardinality of any guard set, this yields a $4$-approximation. 

We first provide a description of the involved tree, and then describe how we exploit it within our algorithm.

\subparagraph*{Skeleton Graph Construction}
Let $P$ be a simple $2$-thin polyomino.
A vertex $v$ of a cell~$s\in P$ is \emph{internal} if it does not lie on the boundary of~$P$. 
Because $P$ is $2$-thin, any unit square $s\in P$ can have at most $3$ internal vertices.
Let~$I$ be the set of internal vertices of unit squares in $P$.
For any $u,v\in I$, we add the edge~$\{u,v\}$ to~$E_I$ if one of the following holds (see~\cref{fig:4-appx-b}):
\begin{enumerate}
	\item $u$ and $v$ belong to the same cell and $\|u-v\|=1$. 
	\item $u$ and $v$ belong to two different cells that share an edge and both vertices of this edge are not internal.
	\item $u$ and $v$ belong to the same cell $s$ and both other vertices belonging to $s$ are not~internal.
\end{enumerate}
As $P$ is $2$-thin, we cannot create a four-cycle with edges stemming from Criterion~1. Furthermore, if we would create a cycle with edges stemming from Criterion 2 or~3, the polyomino would have a hole.
Hence, because $P$ is a simple $2$-thin polyomino, the edges of $E_I$ form a forest $T_I$ on $I$.

For each unit square $s\in P$ that does not have an internal vertex, place a point~$b_s$ in the center of $s$. 
We call $s$ a boundary square, $b_s$ a boundary node, and denote by~$B$ the set of all boundary nodes.
For any $b_s,b_{s'}\in B$ such that $s,s'\in P$ share an edge, we add the edge $\{b_s,b_{s'}\}$ to $E_B$. 
Notice that the edges of~$E_B$ form a forest $T_B$ on $B$.

We now connect $T_B$ and $T_I$ (see~\cref{fig:4-appx-c}). 
Let $s$ be a boundary square that shares an edge with a non-boundary square $s'$. Then $s'$ has at most two internal vertices.
If $s'$ has a single internal vertex $v$, we simply add $\{b_s,v\}$ to $E_{con}$. 
Else, $s'$ has two internal vertices $v,u$, and we add an artificial node $x_{v,u}$ to the set $X$, and the edges $\{b_s, x_{v,u}\}$,  $\{u, x_{v,u}\}$, and $\{v, x_{v,u}\}$ to $E_{con}$. We then remove the edge $\{u,v\}$ from $E_I$.

Let $T$ be the graph on the vertex set $V=I\cup B\cup X$ and the edge set ${E=E_I\cup E_B\cup E_{con}}$. 
Note that, as $P$ is simple, no cycles are created when connecting $T_I$ and $T_B$; thus, $T$ is a tree. 
Moreover, the maximum degree of a node in $T$ is $4$ (for some nodes in $X$ and $I$).

\subparagraph{Associated Squares} 
We associate with each node $v\in T$ a block $S(v)$ of unit squares from~$P$ as follows: 
\begin{enumerate}
	\item For $v=x_{u,v}\in X$, $S(v)$ consists of two unit squares with the edge $\{u,v\}$.
	\item For $v\in I$, $S(v)$ consists of a $2\times 2$ block of unit squares with internal vertex~$v$. 
	\item For $v=b_s\in B$, $S(v)=\{s\}$.
\end{enumerate}

\subparagraph{The Algorithm} As already mentioned, we basically follow the lines of the algorithm for $k$-hop dominating sets in trees~\cite{ack-ltamk-22}, with several important changes.

\smallskip
We start by picking an arbitrary node $r$ from $T$ as a root. For a node $u\in T$, denote by~$T_u$ the subtree of $T$ rooted at $u$. 
Any path between a unit square associated with a node in $T_u$ and a unit square associated with a node in $T\setminus T_u$ includes a unit square from~$S(u)$.
For every node $u\in T$, let $h(T_u)=\max_{v\in T_u,s\in S(v)}\min_{s'\in S(u)}d_P(s,s')$, where~$d_P(s,s')$ denotes the hop distance between the cells $s$ and $s'$ in $P$. 
In other words, $h(T_u)$ is the largest hop distance from a unit square in $\bigcup_{v\in T_u}S(v)$ to its closest unit square from $S(u)$. 

For each cell $s' \in \bigcup_{v\in T_u}S(v)$, the minimum distance $\min_{s\in S(u)}d_P(s,s')$ is assumed at a particular unit square $s$,  we denote by $M(s)$ the set of all these cells for which that distance is assumed for $s$, and set $h_s(T_u)=\max_{s'\in M(s)} d_P(s, s')$. 
Note that if $h(T_u)=k$, and we pick $S(u)$ for our guard set, then every unit square associated with a node in $T_u$ is guarded. 

Initialize an empty set $D$ (for the $k$-hop-visibility guard set), and compute $h(T_u)$ for every $u\in T$ and $h_s(T_u)$ for every $s\in S(u)$.
In addition, for every unit square~$s\in P$ set $\rb_D(s)=-1$ (up to a rest budget of $0$, $s$ is $k$-hop visible to the nodes in $D$). 
This parameter marks the maximum rest budget of the unit square~$s$ over all squares in the guard set $D$.
We~run a DFS algorithm starting from the root $r$, as follows; let $u$ be the current node in the DFS~call. Note that if $h(T_r)\le k$, then we can simply return $S(r)$ as our guard set.

\begin{enumerate}
	\item If $h(T_u)=k$, we add $S(u)$ to $D$, remove $T_u$ from $T$, and set $\rb_D(s)=k$ for every $s\in S(u)$.
	\item Else, if $k-1 \ge h(T_u)\geq k-2$ and $\min_{s\in S(p(u))} d(s,s')>k$ for the parent $p(u)$ of $u$ and $s'\in T_u$ being the unit square that realizes $h(T_u)$,  we add $S(u)$ to $D$, remove $T_u$ from $T$, and set $\rb_D(s)=k$ for every $s\in S(u)$.
	\item Else, for each child $v$ of $u$ with $h(T_v)\ge k-2$, we run the DFS algorithm on $v$. 
	Then we update $h(T_u)$ and $h_s(T_u)$, $\rb_D(s)$ for every $s\in S(u)$, according to the values calculated for all children of $u$. 
	\begin{enumerate}
	\item We check if the remaining $T_u$ is already guarded by $D$, by considering $h_s(T_u)$ and $\rb_D(s)$ for every $s\in S(u)$, where we only consider associated unit squares with negative rest budget.
	\item Else, if the new $h(T_u)$ is now exactly $k$ or if the condition from point 2 holds, then again we add $S(u)$ to $D$, remove $T_u$ from $T_r$, and set $\rb_D(s)=k$ for every $s\in S(u)$.
	\end{enumerate}
\end{enumerate}

If, at the end of the DFS run for $r$, we have $\rb_D(s)=-1$ for some $s\in S(r)$, then we add $S(r)$ to $D$. 
We give an example of our algorithm in~\cref{fig:4-appx}.

\begin{figure}[htb] 
	\centering
	\begin{subfigure}[b]{0.5\textwidth}%
		\centering
		\includegraphics[scale=0.28]{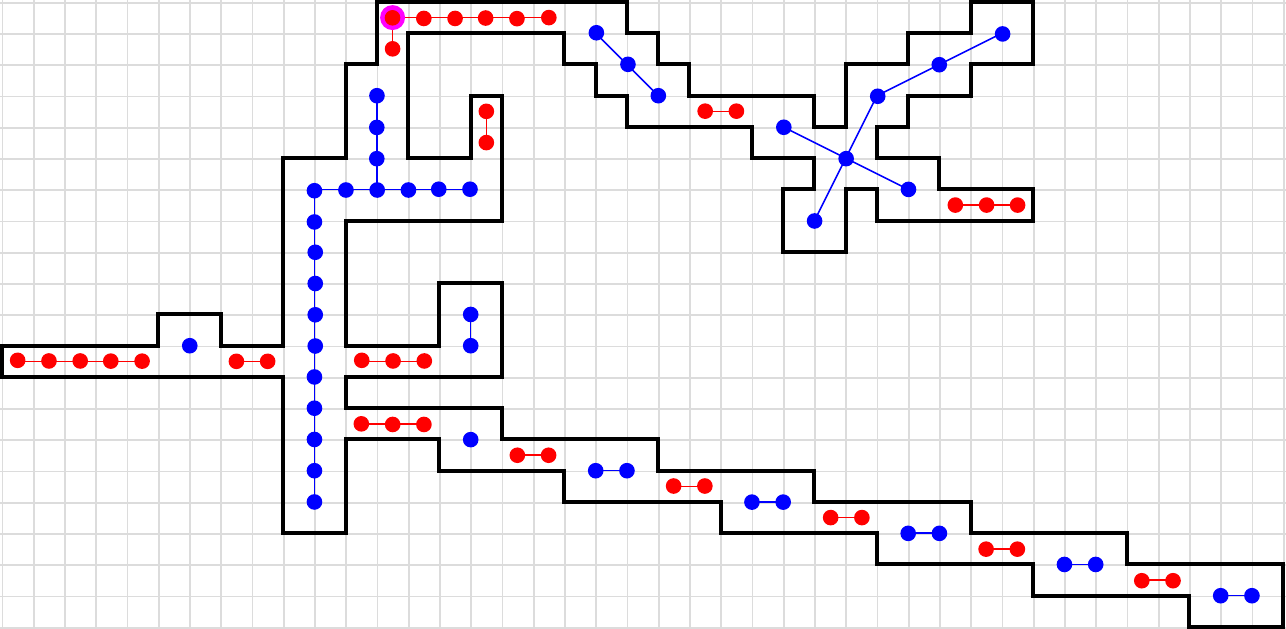}
		\caption{}
		\label{fig:4-appx-b}
	\end{subfigure}%
	\begin{subfigure}[b]{0.5\textwidth}%
		\centering
		\includegraphics[scale=0.28]{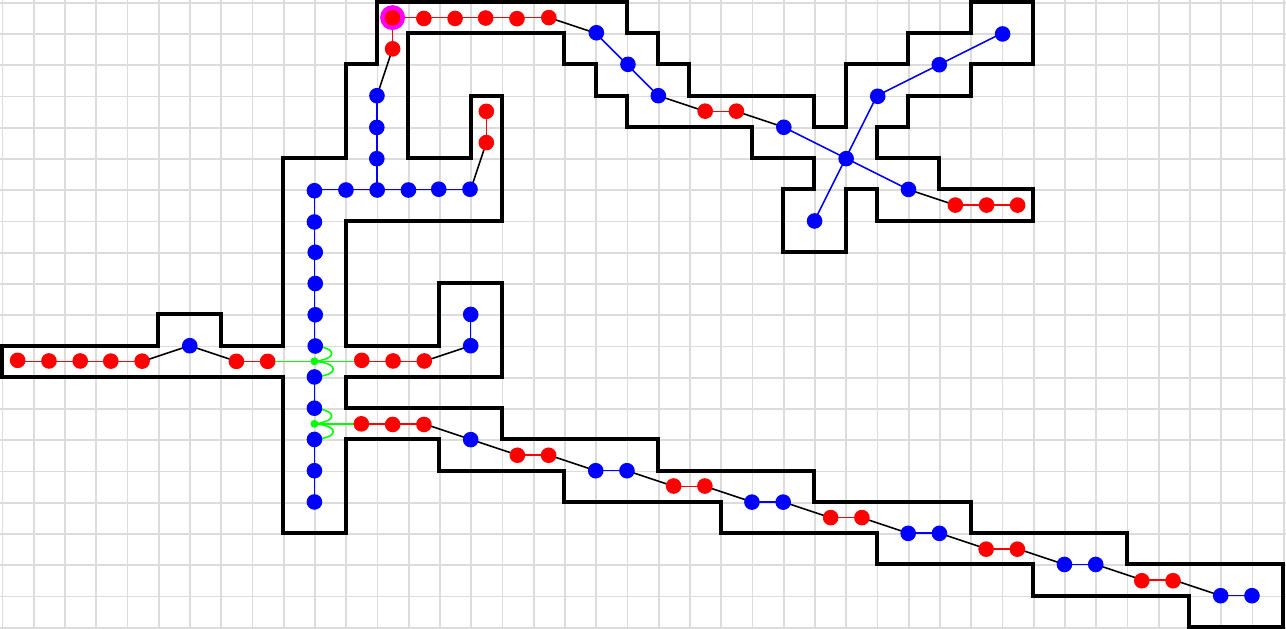}
		\caption{}
		\label{fig:4-appx-c}
	\end{subfigure}\\\vspace*{0.3cm}
	\begin{subfigure}[b]{0.5\textwidth}%
		\centering
		\includegraphics[scale=0.28]{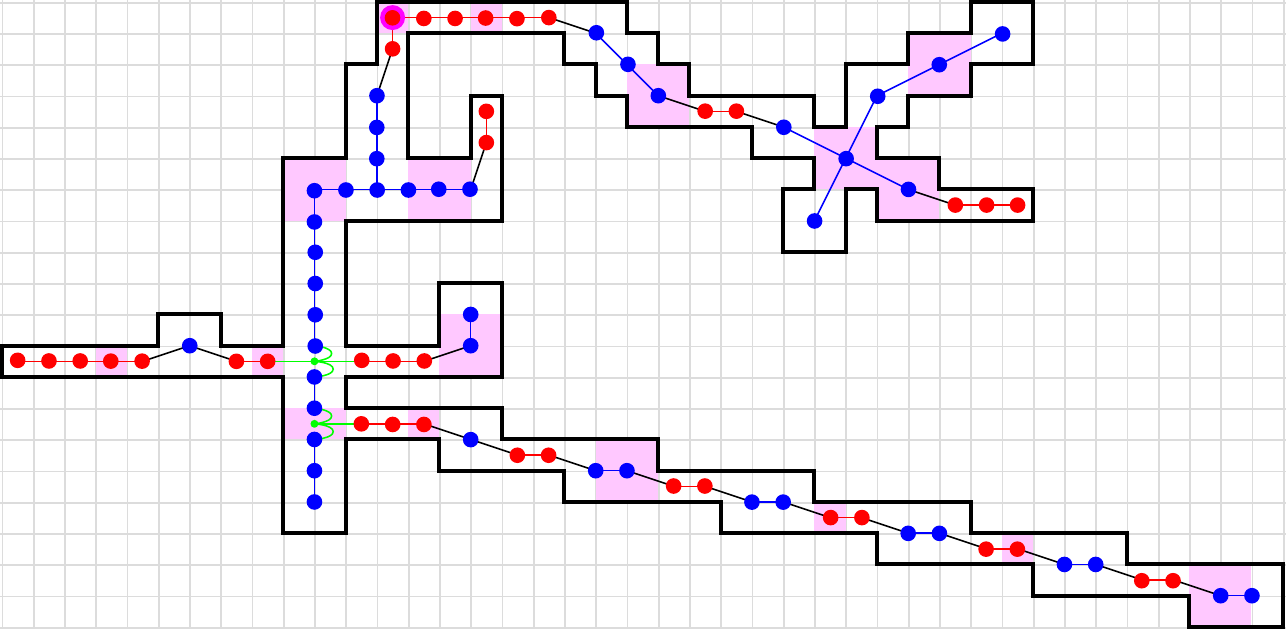}
		\caption{}
		\label{fig:4-appx-d}
	\end{subfigure}%
	\begin{subfigure}[b]{0.5\textwidth}%
		\centering
		\includegraphics[scale=0.28]{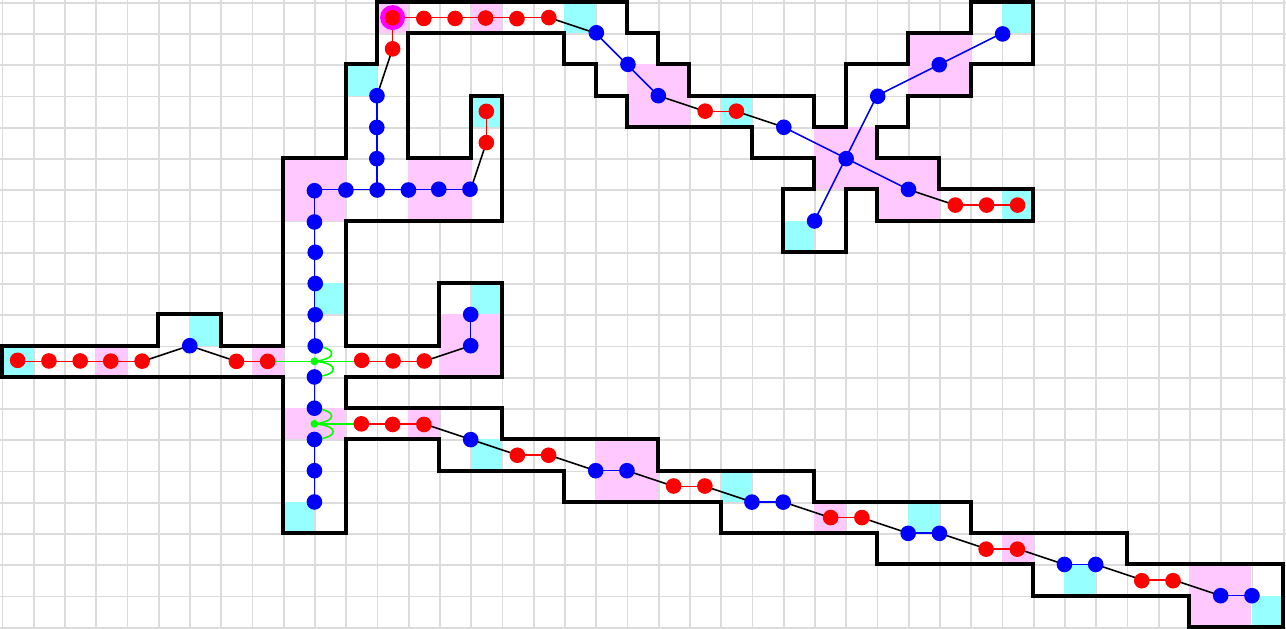}
		\caption{}
		\label{fig:4-appx-e}
	\end{subfigure}%
	\caption{\label{fig:4-appx} 
		Example for our algorithm for $k=3$: 
		(a) Polyomino $P$, in black, with the associated vertices: vertices in \textcolor{blue}{$I$} in blue (\textcolor{blue}{$\bullet$}), vertices in \textcolor{red}{$B$} in red (\textcolor{red}{$\bullet$}), the root~$r$ is indicated in magenta (\textcolor{magenta}{$\sbullet[2]$}), the trees $T_I$ and $T_B$ with edges in $E_I$ are shown in blue, edges in $E_B$ are shown in red; 
		(b) connecting $T_I$ and~$T_B$, vertices~$x_{v,u}$ and their incident edges are shown in green (\textcolor{green}{$\bullet$}), all other connecting edges are shown in black; 
		(c) unit squares added to $D$ in light pink; 
		(d) placement of witnesses from the proof of~\cref{th:4-appx} in turquoise.
	}
\end{figure} 

\medskip
We show that, after termination of the described algorithm, $D$ is a $k$-hop-visibility guard set for the given simple $2$-thin polyomino $P$ of size at most $4\cdot \textrm{OPT}$ for all $k\in \mathbb{N}$, where $\textrm{OPT}$ is the size of an optimal solution.

\begin{theorem}\label{th:4-appx}
	There is a linear-time algorithm computing a $4$-approximation for the M$k$GP in simple \num{2}-thin polyominoes.
\end{theorem}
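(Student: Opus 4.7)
The plan has three components to verify about the algorithm: (i) $D$ is a valid $k$-hop-visibility guard set, (ii) $|D|\le 4\cdot\textrm{OPT}$, and (iii) the runtime is linear in $n$. The central tool will be the construction of a witness set $W$ in lockstep with $D$, such that every time we add a block $S(u)$ to $D$ we can exhibit a fresh witness whose $k$-hop visibility region is disjoint from all previously chosen ones. Since $|S(u)|\le 4$ by the definition of associated squares, this immediately yields $|D|\le 4|W|\le 4\cdot\textrm{OPT}$.

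For \textbf{correctness}, I would prove the following invariant by induction on the DFS recursion: when the call on node $u$ returns, every unit square associated with a descendant of $u$ that is \emph{not} covered by the current $D$ lies at hop distance at most $h(T_u)\le k-3$ from the nearest cell of $S(u)$ in $P$. The base case is a leaf, for which $h(T_u)=0$. For the inductive step, if step~1 or step~2 fires at $u$, all of $T_u$ is guarded by $S(u)$ (using the Rest-Budget Observation on the distances recorded in $h_s(T_u)$). Otherwise, only children with $h(T_v)\le k-3$ remain unguarded after their recursive calls, and their squares lie within distance $h(T_v)+O(1)\le k-2$ of $S(u)$, so updating $h(T_u)$ via the children preserves the bound. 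At the root, the final check adds $S(r)$ whenever any associated cell still has $\rb_D=-1$, closing the argument.

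For the \textbf{approximation ratio}, each time a block $S(u)$ is inserted into $D$, I would designate as witness the deepest remaining unit square $w_u$ in $T_u$ realizing $h(T_u)$; this is the square that triggered the insertion. Its $k$-hop visibility region reaches $S(u)$ exactly and, by step~2's guard that $\min_{s\in S(p(u))} d(s,w_u)>k$, does not reach into any ancestor's block. Between two witnesses $w_u$ and $w_{u'}$ belonging to disjoint subtrees, any path in $G_P$ connecting them must leave through a common ancestor; using the case conditions ($h(T_u)\in\{k-2,k-1,k\}$ when insertion occurs) one verifies $d_P(w_u,w_{u'})>2k$, so their $k$-hop visibility regions are disjoint. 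Therefore $W$ is a valid witness set, $|W|\le\textrm{OPT}$, and $|D|\le 4|W|$.

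For the \textbf{runtime}, the skeleton tree $T$ has $O(n)$ vertices and edges and can be built in $O(n)$ time by a single scan over $P$. All quantities $h(T_u)$, $h_s(T_u)$, and $\rb_D(s)$ can be maintained incrementally, with $O(1)$ work per DFS visit, because $|S(u)|\le 4$ and each node has degree at most $4$. \textbf{The main obstacle} I anticipate is the witness-disjointness argument: the three-level threshold $h(T_u)\in\{k-2,k-1,k\}$ and the possibility that two witnesses sit in sibling subtrees meeting at a common ancestor with associated $2\times 2$ block force a careful case distinction on whether the connecting shortest path passes through the interior of $S(u)$ or skirts its boundary. Handling this cleanly likely requires verifying, via the $2$-thinness of $P$, that any path between $w_u$ and $w_{u'}$ crossing $S(u)$ contributes at least one extra hop per subtree beyond $h(T_u)+h(T_{u'})$, which together with the insertion condition yields the needed strict inequality $d_P(w_u,w_{u'})>2k$.
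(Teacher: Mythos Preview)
Your plan is essentially the paper's approach: same witness choice (the cell realizing $h(T_{u_i})$), same counting argument ($|S(u)|\le 4$ per insertion), and same runtime analysis. The one place where the paper is simpler than what you anticipate is the witness-disjointness proof. Rather than a symmetric case analysis on paths through a common ancestor, the paper argues \emph{sequentially}: for each $i$, it shows $d_P(s_i,s_j)\ge 2k+1$ for all $j<i$ by using the running $\rb_D$ values. When $S(u_i)$ is inserted, the witness $s_i$ still has $\rb_{D\setminus S(u_i)}(s_i)=-1$, so it is at distance $>k$ from every cell of every earlier $S(u_j)$; and since any $s_i$--$s_j$ path in $P$ must pass through $S(u_j)$ (the separator property of associated squares), and $s_j$ sits at distance $k$ from some cell of $S(u_j)$, the bound follows. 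This temporal bookkeeping sidesteps the three-level threshold case split you flag as the main obstacle. One small correction to your invariant: after a DFS call on $u$ returns without inserting $S(u)$, the updated $h(T_u)$ need not be $\le k-3$; it can be up to $k-1$ (children with $h(T_v)\le k-3$ contribute, plus the hop from $S(v)$ to $S(u)$), but then step~3(b) checks again whether insertion is needed.
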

\begin{proof}
	During the algorithm we have just described, we remove a node $v$ from $T$ only if $S(v)$ is covered by cells in~$D$. 
	Because $\cup_{v\in T}S(v)=P$, $D$ is a $k$-hop-visibility guard set for $P$.
	
	Next, let $u_1,\dots, u_\ell$ be the sequence of nodes of $T$ such $S(u_i)$ was added to the set~$D$ during the algorithm. We show that in each $T_{u_i}$ we can find a witness unit square~$s_i$, such that no two witness unit squares $s_i\neq s_j$ have a single unit square in~$P$ within hop distance $k$ from both $s_i$ and $s_j$. This means that any optimal solution has size at least $\ell$ ($W=\{s_1,\ldots,s_{\ell}\}$ is a witness set with $|W|=\ell$). Because we add at most $4$~unit squares to~$D$ in each step of the algorithm, we get a solution of size at most $4\ell$, as required.
	
	\smallskip
	We choose $s_i$ to be the unit square from $T_{u_i}$ with maximum distance to its closest unit square from~$S(u_i)$, i.e., the unit square that realizes $h(T_{u_i})$. 
	We claim that there is no cell in~$P$ within hop distance $k$ from both $s_i$ and $s_j$ for any $j<i$.
	
	\smallskip
	If $S(u_i)$ was added to~$D$ because $T_{u_i}=k$, we had $s_i$ being the node realizing~$T_{u_i}$. 
	Hence, we have~$\rb_{D\setminus\{u_i\}}(s_i)=-1$, and thus, the distance from $s_i$ to any $s_j, j<i$ is at least $2k+1$.
	
	\smallskip
	If $S(u_i)$ was added to~$D$ because $h(T_{u_i})\geq k-2$ and $\min_{s\in S(p(u))} d(s,s_i)>k$ for the parent $p(u)$ of $u$ and $s_i\in T_{u_i}$ being the unit square that realizes $h(T_{u_i})$,  we know (because each unit square of the polyomino is an associated unit square of at least one node) that there is a unit square $s''\in S(u_i)$ with $d(s'',s_i)=k$.   Thus,  any witness placed after $s_i$ has distance to it of at least $2k+1$. Moreover, $\rb_{D\setminus\{u_i\}}(s_i)=-1$ and, thus,  $s_i$'s distance to any $s_j, j<i$ is at least $2k+1$.
	
	\smallskip
	As we initialize $h(T_u)$ for every $u\in T$ and $h_s(T_u)$ for every $s\in S(u)$ with a BFS-call, and update the values at most once for each square in linear time, the overall running time is $O(n)$.
\end{proof}

\bibliography{lit}
\end{document}